\newcommand{\calH}{{\mathcal H}}
\newcommand{\calK}{{\mathcal K}}
\newcommand{\calL}{{\mathcal L}}
\newcommand{\calU}{{\mathcal U}}
\newcommand{\kappamax}{\kappa_{\mbox{\scriptsize max}}}
\newtheorem{theorem}{Theorem}
\begin{document}

\title{Particle Statistics, Frustration, and Ground-State Energy}

\author{Wenxing Nie}
\affiliation{Center for Theoretical Physics, Department of Physics, Sichuan University, Chengdu 610064, China}
\author{Hosho Katsura}
\affiliation{Department of Physics, University of Tokyo, Hongo, Bunkyo-ku, Tokyo 113-0033, Japan}
\author{Masaki Oshikawa}
\affiliation{Institute for Solid State Physics, University of Tokyo,
Kashiwa 277-8581, Japan}

\date{\today}

\begin{abstract}

We study the connections among particle statistics, frustration, and
ground-state energy in quantum many-particle systems.
In the absence of interaction,
the influence of particle statistics on the ground-state energy is
trivial: the ground-state energy of noninteracting bosons is lower than
that of free fermions because of Bose-Einstein condensation (BEC) and Pauli exclusion principle. In the presence of hard-core or other interaction,
however, the comparison
is not trivial. Nevertheless, the ground-state energy of hard-core
bosons is proved to be lower than that of spinless fermions, if
all the hopping amplitudes are nonnegative.
The condition can be understood as the absence of frustration
among hoppings. By mapping the many-body
Hamiltonian to a tight-binding model on a fictitious lattice,
we show that the Fermi statistics of the original particles introduces
an effective magnetic flux in the fictitious lattice.
The latter can be effectively regarded as a kind of frustration,
since it leads to a destructive interference among different paths along which a single particle is propagating.
If we introduce hopping frustration, the hopping frustration is expected to
compete with the effective frustration due to the Fermi statistics,
leading to the possibility that the ground-state energy of
hard-core bosons can be higher than that of
fermions. We present several examples, in which
the ground-state energy of hard-core bosons is proved to be higher
than that of fermions due to the hopping frustration.
The basic ideas were reported in the preceding Letter [W.-X. Nie, H. Katsura, and M. Oshikawa,
Phys. Rev. Lett. {\bf 111}, 100402 (2013)]; more details and
several extensions, including one to the spinful case, are discussed
in the present paper.

\end{abstract}
\pacs{05.30.Fk, 05.30.Jp, 71.10.Fd}
\date{\today}
\maketitle

\section{\label{intro}Introduction}

In this paper, we study a simple question:
how the particle statistics affects the ground-state energy of the
system.
More specifically, we compare the ground-state energy of
bosons and fermions on an identical lattice with same
parameters such as hopping amplitudes.

In noninteracting systems, the influence of particle statistics on the
ground-state energy can be understood easily.  The two systems in
comparison are exactly equivalent to each other, and thus have exactly
the same ground-state energy, when only a single particle is present.
The ground-state energy of fermions is simply given by the sum of the
lowest single-particle energy eigenvalues, following the Aufbau
principle.  In contrast, in the ground state of noninteracting bosons,
all the bosons condense into the lowest single-particle state. This
phenomenon is known as
Bose-Einstein Condensation (BEC). Therefore, the
ground state energies of
non-interacting bosons and fermions satisfy the ``natural''
inequality:
\begin{equation}
E_0^{\rm B}\leq E_0^{\rm F}.  \label{eq.e0b.leq.e0f}
\end{equation}

On the other hand, the comparison of the ground-state energies of bosons
and fermions is not trivial in the presence of interaction, because the
simple argument based on the perfect BEC breaks down.
In a system of interacting bosons, it is in fact already a nontrivial question whether
the BEC actually takes place.  Einstein's original argument depends on
the absence of interaction.  For interacting bosons, there is no general
theorem that BEC always occurs~\cite{Leggett}.  A counterexample is the
solid ${}^4$He phase, where BEC is absent even at zero temperature,
under a sufficiently high pressure.  Rigorously proven examples of BEC,
in the sense of the off-diagonal long-range order (ODLRO),
in  interacting systems are still rather
limited~\cite{Kennedy1988,LiebSeiringer2002,Aizenman2004}.  Even if the
occurrence of BEC or the ODLRO is proved in a
system of interacting bosons, it does not necessarily restrict
the ground-state energy, because single-particle states with higher energies
can be partially occupied. In particular, an ODLRO does not necessarily
imply the inequality~\eqref{eq.e0b.leq.e0f}.
In fact, the influence of particle statistics on the ground-state energy had not been
much explored in strongly correlated systems.

The comparison of the ground-state energies is particularly appealing
in the case of hard-core bosons and fermions. In both kinds of systems,
each site is either empty or occupied by a single particle.
Thus the dimension of the Hilbert space is identical between
them. Nevertheless, the different particle statistics generically
lead to different ground-state energies, as we will see in the following.

Before discussing the issue any further, let us comment on the physical
relevance of the question itself.
The energy eigenvalue itself is generally unphysical in the sense that
one can always redefine the energy by adding a constant.
It is thus the difference of energies of two different states
that matters.

We can understand the difference by defining the ground-state
energy with respect to a simple reference state in each system,
such as a vacuum state (in which every site is empty).
This ground-state energy is the sum of energy gains in the process
of filling the system with particles,
and is a measurable quantity~\cite{He_expe1,He_expe2}.
This is somewhat similar to the
``enthalpy of formation'' studied in chemistry~\cite{Atkins},
which is the total change  of enthalpy (per mole) when the compound
is formed from its elements under a certain condition.

Since the vacuum state is equivalent between the system of bosons
and fermions, the comparison of the ground-state energies is completely
well defined.
Moreover, even when the energy difference itself cannot be measured,
the comparison of the ground-state energies is relevant for
understanding stability of various different phases.
This is particularly the case with the possible realization of
statistical transmutation, as we will discuss later in this paper.

Concerning the comparison of the ground-state energies between
bosons and fermions, recently we found~\cite{ourPRL} a sufficient condition for
the natural inequality~\eqref{eq.e0b.leq.e0f} to hold,
without relying on the occurrence of BEC.
That is, if all the hopping amplitudes are nonnegative,
the ground-state energy of hard-core bosons
is still lower than that of the corresponding fermions.
This theorem is extended to the spinful case
in the present paper.
Once we relax the condition of nonnegative hopping amplitudes,
it is possible to reverse the inequality so that
the ground-state energy of bosons
is higher than that of fermions.
We find several concrete models in which such a reversal
is realized; and in several cases it is even proved rigorously.
More examples and techniques will be
introduced in the present paper, than those discussed in
Ref.~\onlinecite{ourPRL}.

Moreover, our study leads to a novel physical understanding of the
effects of particle statistics, in terms of \emph{frustration} in
quantal phase.  This is more general than the picture based on the
perfect BEC, and is indeed applicable to systems with interaction.

We can map a quantum many-particle problem to a
single-particle problem on a fictitious lattice in higher dimensions.
When all the hopping amplitudes are nonnegative and the particles
are bosons, the corresponding single-particle problem also
has only nonnegative hopping amplitudes.
In such a case, there is no frustration in the
quantal phase of the wavefunction.
On the other hand,
Fermi statistics of the original particles gives an effective
magnetic flux in the corresponding single-particle problem.
This implies a frustration in the phase of the wavefunction,
induced by the Fermi statistics.
When a magnetic flux is introduced in the original quantum
many-particle problem, it also results in a magnetic flux
in the corresponding single-particle problem, inducing
a frustration.
This hopping-induced frustration and the
the effective frustration induced by the Fermi statistics
can sometimes partially cancel with each
other, resulting in the reversed inequality between
the ground-state energies of the hard-core
bosons and fermions.

The paper is organized as follows. In Sec.~\ref{natural}, we present the
full proof of the natural inequality for the spinless case and extend the
discussion to the spinful case. Based on the proof, in
Sec.~\ref{frustration}, we put forward a unified understanding of
the frustration for bosons and fermions in the same manner.
As a by-product, a strict version of the diamagnetic inequality
for a general lattice is presented.
Several examples, in which the natural inequality
is violated owing to the hopping frustration,
are presented in Sec.~\ref{reversed}.
The examples include a simple yet instructive, exactly
solvable model of particles on a one-dimensional ring,
two-dimensional systems of coupled rings, 
systems with flux in $2$D and $3$D, 
and flat band models.
Rigorous proof of the reversed inequality is provided for most cases.
Conclusions and discussions are
presented in Sec.~\ref{conclusions}.
Detailed proofs of some of the theorems, and related technical
results are presented in Appendices.

\section{\label{natural}Natural inequality}
The natural inequality~\eqref{eq.e0b.leq.e0f} holds
trivially for noninteracting bosons and fermions
with the same form of the Hamiltonian.
Now we present three theorems, which state that the
Eq.~\eqref{eq.e0b.leq.e0f} holds even for hard-core bosons,
provided that all the hopping amplitudes are nonnegative.
A brief overview appeared in Ref.~\onlinecite{ourPRL}, but
here we give a more detailed discussion, and also
an extension to the spinful case.

\subsection{\label{spinless}Natural inequality for spinless case}
First we consider the comparison of spinless hard-core bosons
with spinless fermions.
We assume the system of bosons or fermions is
described by the same form of Hamiltonian,
\begin{equation}
\calH=-\sum_{j\ne k}\left(
t_{jk}c^{\dagger}_jc_k+\mbox{H.c.}\right)
-\sum_j\mu_jn_j
+\sum_{j,k}V_{jk}n_jn_k,
\label{eq.Ham}
\end{equation}
where $j$ is the label of a site on a finite lattice $\Lambda$ and
$n_j\equiv c^{\dagger}_jc_j$ is the number of particles on $j$-th site.
Chemical potential $\mu$ is the uniform (site independent) part of $\mu_j$.
For a system of bosons, we identify $c_j$ with the boson
annihilation operator $b_j$ satisfying the standard commutation
relations, with the hard-core constraint $n_j = 0,1$ at
each site.  The hard-core constraint
may also be implemented by introducing
an infinite on-site interaction $\frac{\calU}{2}\sum_jn_j(n_j - 1)$, where $\calU\to+\infty$.
For a system of fermions, we identify $c_j$ with the fermion
annihilation operator $f_j$ satisfying the standard anticommutation
relations.

This Hamiltonian is very general. We do not make any assumption
on the dimensionality or the geometry of the lattice $\Lambda$, or
on the range of the hoppings.
In addition, the interaction is also arbitrary, as long
as it can be written in terms of $V_{jk}$.
The interesting aspect of attractive interaction will be discussed  
in Appendix~\ref{app:proof3}.
We note that the Hamiltonian~\eqref{eq.Ham}
conserves the total particle number.  Thus the ground state can be
defined for a given number of particles $M$ (canonical ensemble), or for
a given chemical potential $\mu$ (grand canonical ensemble).  The
comparison between bosons and fermions can be made in either
circumstance.

Now we will present a sufficient condition
for the natural inequality~\eqref{eq.e0b.leq.e0f}.
Moreover, a sufficient condition for the
strict inequality $E_0^{\rm B} < E_0^{\rm F}$ is provided.
The proof is also illuminating for physical understanding
of the natural inequality in interacting systems,
showing the importance of the particle statistics and
exchange processes.

\begin{theorem}\textup{(Natural inequality for spinless case)}

The inequality~\eqref{eq.e0b.leq.e0f} holds for any
given number of particles $M$ on a finite lattice $\Lambda$ with
$N \geq M$ sites,
if all the hopping amplitudes $t_{jk}$ are real and nonnegative.

Furthermore,
if the lattice $\Lambda$ is connected, and has a site directly connected
to three or more sites, and if the number of particles satisfies
$2 \leq M \leq N-2$, the strict inequality $E_0^{\rm B} < E_0^{\rm F}$ holds.
\label{thm.natural_spinless}
\end{theorem}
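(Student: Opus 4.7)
The plan is to deduce both the weak and the strict inequality from a single variational construction: starting from the fermionic ground state $|\Psi_F\rangle$, build a bosonic trial wavefunction by taking absolute values of its amplitudes in the occupation-number basis. Concretely, I would fix an ordering of the sites of $\Lambda$ and work in the $M$-particle occupation basis $\{|\alpha\rangle\}$, with fermionic basis states defined by applying $f_j^\dagger$'s in that order. Because $t_{jk}$, $\mu_j$, $V_{jk}$ are all real, the matrix of $\calH$ in this basis is real, so $\psi_F(\alpha)$ can be chosen real, and one may define the normalized trial state $|\Psi_B\rangle = \sum_\alpha |\psi_F(\alpha)|\,|\alpha\rangle_B$.

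The diagonal pieces of $\calH$ depend only on $|\psi_F(\alpha)|^2$ and therefore give identical contributions in $\langle\Psi_B|\calH|\Psi_B\rangle$ and $\langle\Psi_F|\calH|\Psi_F\rangle$. For the off-diagonal hopping, whenever $\alpha,\beta$ are connected by a single hop on bond $(j,k)$, the matrix element is a nonpositive real number proportional to $-t_{jk}$ in the hard-core boson basis, multiplied by a Fermi reordering sign $s_{\alpha\beta}=\pm 1$ in the fermion basis. Since $|\psi_F(\alpha)||\psi_F(\beta)| \geq s_{\alpha\beta}\psi_F(\alpha)\psi_F(\beta)$ and the prefactor is nonpositive, the bosonic contribution is bounded above by the fermionic one term by term. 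Summing and invoking the variational principle gives $E_0^{\rm B} \leq \langle\Psi_B|\calH|\Psi_B\rangle \leq E_0^{\rm F}$, which is the weak inequality.

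For the strict inequality I would argue by contradiction. If $E_0^{\rm B} = E_0^{\rm F}$, then $|\Psi_B\rangle$ is itself a bosonic ground state. The hard-core boson Hamiltonian on the $M$-particle sector has only nonpositive off-diagonal entries, and connectivity of $\Lambda$ ensures that the configuration graph (vertices $=$ $M$-particle subsets, edges $=$ single-hop pairs) is also connected. Perron--Frobenius then gives a unique bosonic ground state with strictly positive amplitudes, forcing $|\psi_F(\alpha)|>0$ for every $\alpha$. Term-by-term saturation of the variational bound then demands $\mathrm{sgn}(\psi_F(\alpha))\,\mathrm{sgn}(\psi_F(\beta)) = s_{\alpha\beta}$ for every hop-connected pair with $t_{jk}>0$. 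Multiplying this relation along any closed loop $\alpha_0\to\alpha_1\to\cdots\to\alpha_n=\alpha_0$ in the configuration graph, the left-hand side telescopes to $1$, so the consistency condition $\prod_i s_{\alpha_i\alpha_{i+1}}=+1$ must hold on every loop.

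The main obstacle is therefore reduced to exhibiting a single loop with $\prod_i s_{\alpha_i\alpha_{i+1}}=-1$, using only the hypothesis that $\Lambda$ contains a vertex $s$ of degree $\geq 3$. I would pick three of its neighbors $a,b,c$ and consider the six-step shuttle of two particles around the local star,
\[\{a,b\}\to\{s,b\}\to\{c,b\}\to\{c,s\}\to\{c,a\}\to\{s,a\}\to\{a,b\},\]
with the remaining $M-2$ particles parked on any fixed set of sites disjoint from $\{s,a,b,c\}$. The hypotheses $M \geq 2$ and $M \leq N-2$ are exactly what is needed to guarantee, respectively, two movable shuttle particles and enough vacant sites off the star for the spectators. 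A direct bookkeeping of the Fermi reorderings on this six-step loop gives product $-1$, contradicting the loop consistency derived above; hence $E_0^{\rm B}<E_0^{\rm F}$. The remaining care is mainly combinatorial: keeping the site-ordering conventions consistent and verifying that the spectator placement can always be made, both of which I expect to be routine.
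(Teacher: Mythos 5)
Your proof of the non-strict inequality is essentially the paper's: take the real fermionic ground state, form the bosonic trial state with amplitudes $|\psi_F(\alpha)|$, and compare expectation values term by term (the paper phrases this via $\calK^{\rm B,F}=-\calH^{\rm B,F}+C\mathds{1}$ and its largest eigenvalue, which is cosmetic). For the strict inequality you take a genuinely different, and correct, route. The paper expands the diagonal entries of $\calL^{S}=(\calK^{S})^{n}$ as sums over hopping paths, uses the six-step exchange at a branching site to make the triangle inequality strict, $\calL^{\rm B}_{aa}>|\calL^{\rm F}_{aa}|$ (the exchange path contributes negatively for fermions while the identity path contributes positively), and then applies a Perron--Frobenius \emph{comparison} corollary (Horn--Johnson, Theorem 8.4.5) to conclude $\kappamax^{\rm B}>\kappamax^{\rm F}$. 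You instead analyze the equality case of the variational bound: $E_0^{\rm B}=E_0^{\rm F}$ would make $|\Psi_B\rangle$ a bosonic ground state, which by the basic Perron--Frobenius theorem (irreducibility following from connectivity of the configuration graph) must be the unique, strictly positive Perron vector; term-by-term saturation then forces $s_{\alpha\beta}=\mathrm{sgn}\,\psi_F(\alpha)\,\mathrm{sgn}\,\psi_F(\beta)$ on every bond of the configuration graph, i.e., the Fermi signs are pure gauge with trivial product around every loop, while the six-step exchange loop has product $-1$ (the signature of a transposition), a contradiction. Both arguments rest on the same two pillars---irreducibility of the bosonic matrix and the exchange loop, with $2\le M\le N-2$ entering for identical counting reasons---but yours needs only the simplicity/positivity statement of Perron--Frobenius rather than the comparison corollary, and it makes literal the statistical-flux picture of Sec.~\ref{frustration}: equality of ground-state energies would force the $\mathbb{Z}_2$ flux induced by Fermi statistics on the fictitious lattice to vanish on every loop. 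What the paper's path-counting method buys in exchange is that it transfers verbatim to complex hoppings, yielding the strict diamagnetic inequality of Theorem~\ref{thm.diamagnetic}, whereas your sign-rigidity argument as written relies on realness of the amplitudes (though it admits an analogous $U(1)$-phase generalization). The two steps you defer as routine---connectivity of the hard-core configuration graph and the $-1$ bookkeeping of the exchange loop---are indeed standard and are asserted at the same level of detail in the paper itself.
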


\begin{proof}
To write the matrix elements of the Hamiltonian~\eqref{eq.Ham}, we choose the occupation number basis
$|\phi^a\rangle\equiv|\{n^a_j\}\rangle$,
where $M$ is the total number of particles satisfying $\sum_j n^a_j = M$.
The matrix elements of the number operator $n_j$ are the same for hard-core bosons and spinless fermions in this basis.
We begin by defining the operator
\begin{equation}
\calK^{\rm B,F} \equiv - \calH^{\rm B,F} + C \mathds{1}.  \label{eq.K}
\end{equation}
For convenience, we added an identity matrix with large enough diagonal
elements $C$ such that all the eigenvalues $\kappa^{\rm B,F}$ of matrix
$\calK^{\rm B,F}$ and thus all the diagonal matrix elements
$\calK^{\rm B,F}_{aa}$ are positive.
The relation of the matrix elements for bosonic and fermionic operators
can be summarized as
\begin{equation}
\label{eq.Kab}
\calK_{ab}^{\rm B}=\left\{
\begin{array}{ll}
|\calK_{ab}^{\rm F}|&(a\ne b)\\
\calK_{aa}^{\rm F}& (a=b)
\end{array}
\right.=|\calK_{ab}^{\rm F}|.
\end{equation}
The difference between bosons and fermions is that,
given nonnegative hopping amplitudes $t_{jk}$,
the matrix elements of the bosonic operator
$\calK^{\rm B}$ is nonnegative,
while those of the fermionic operator $\calK^{\rm F}$
can be negative in sign.
This difference in signs generically leads to different
ground-state energies between bosons and fermions.

The ground state of the Hamiltonian $\calH^{\rm B,F}$ corresponds to the
eigenvector belonging to the largest eigenvalue $\kappamax^{\rm B,F}$
of $\calK^{\rm B,F}$.
Let $|\Psi_0\rangle_{\rm F}=\sum_a\psi_a|\phi^a\rangle_{\rm F}$
be the normalized ground state for fermions.
The trial state for the bosons can be assumed as
$|\Psi_0\rangle_{\rm B}=\sum_a|\psi_a| |\phi^a\rangle_{\rm B}$,
where $|\phi^a\rangle_{\rm B}$ is the basis state for bosons
corresponding to $|\phi^a\rangle_{\rm F}$.
Then, by a variational argument,
\begin{eqnarray}
\kappamax^{\rm B}&\ge&
{}_{\rm B}\langle\Psi_0|\calK^{\rm B}|\Psi_0\rangle_{\rm B}
=\sum_{ab}|\psi_a||\psi_b|\calK_{ab}^{\rm B}
\nonumber\\
&\ge& \sum_{ab} \psi_a^{*}\psi_b\calK_{ab}^{\rm F}=\kappamax^{\rm F}
\end{eqnarray}
holds, implying $E_0^{\rm B}\le E_0^{\rm F}$.
The first part of Theorem~\ref{thm.natural_spinless} is thus proved.
As a simple corollary, the ground-state energies for
a given chemical potential $\mu$
also satisfy Eq.~\eqref{eq.e0b.leq.e0f}.

In order to prove the strict version of the natural inequality,
let us consider $\calL^{S} \equiv \left(\calK^S\right)^n$,
where $S={\rm B,F}$, for a positive integer $n$.
In the occupation number basis, the matrix element of $\calL$ is expanded as
\begin{equation}
\calL^{S}_{ab} = \sum_{c_1,\ldots,c_{n-1}}
\calK^{S}_{ac_1}\calK^{S}_{c_1c_2}\calK^{S}_{c_2c_3}\ldots
\calK^{S}_{c_{n-1}b},
\label{eq.L_elem}
\end{equation}
in which each term in the sum represents a particle hopping process among the connected sites.

From the definition of $\calL^S$ and
the relation~\eqref{eq.Kab}
between $\calK^{\rm B}$ and $\calK^{\rm F}$,
we have the inequality for
matrix elements of $\calL^{\rm B,F}$:
\begin{eqnarray}
\calL^{\rm B}_{ab} &=&\sum_{c_1,\ldots,c_{n-1}}
\calK^{\rm B}_{ac_1}\calK^{\rm B}_{c_1c_2}\calK^{\rm B}_{c_2c_3}\ldots
\calK^{\rm B}_{c_{n-1}b}\\
&=&\sum_{c_1,\ldots,c_{n-1}}
|\calK^{\rm F}_{ac_1}\calK^{\rm F}_{c_1c_2}\calK^{\rm F}_{c_2c_3}\ldots
\calK^{\rm F}_{c_{n-1}b}|\nonumber\\
&\ge& |\sum_{c_1,\ldots,c_{n-1}}
\calK^{\rm F}_{ac_1}\calK^{\rm F}_{c_1c_2}\calK^{\rm F}_{c_2c_3}\ldots
\calK^{\rm F}_{c_{n-1}b}|
=|\calL_{ab}^{\rm F}|.
\end{eqnarray}
This applies, in particular, to the diagonal elements with
$b  = a$.

From Eq.~\eqref{eq.Kab}, the matrix elements of $\calK^{\rm F}$ and
thus the amplitudes of the process in Eq.~\eqref{eq.L_elem} can be
negative for fermions, while they are nonnegative for bosons.
The difference between bosons and fermions shows up
exactly when two particles are exchanged. To make two-particle exchange
process possible, let us introduce  
a ``branching'' site directly connected to three or more sites
belonging to the lattice.
An example of the branching site connected to three sites is shown in
Fig.~\ref{fig.particle-exchange}.
If the number of particle falls in
the range $2 \leq M \leq N-2$, two particles can be exchanged from an
initial state $|\phi^a \rangle$ and back to the same state in $6$
hoppings, with the aid of the branch structure. An example of particle exchange process on a lattice with a
branching site is demonstrated schematically in
Fig.~\ref{fig.particle-exchange}.  The contribution to the diagonal
elements of bosons $\calL_{aa}^{\rm B}$ is always positive at $n=6$, while
the contribution to $\calL_{aa}^{\rm F}$ is negative when two particles
are exchanged.
On the other hand, there is always a positive contribution
to $\calL_{aa}^{\rm B}$ and $\calL_{aa}^{\rm F}$
in the expansion of Eq.~\eqref{eq.L_elem}, at least from
the invariant process $c_j=a$ in which no particle moves
in $n$ steps.
Thus, the strict inequality $\calL_{aa}^{\rm B}>|\calL_{aa}^{\rm F}|$
holds in this case.

When the lattice $\Lambda$ is connected, any basis state $|\phi^a
\rangle_{\rm B}$ can be reached by 
consecutive applications of the
hopping term in $\calK^{\rm B}$, and thus the matrix $\calK^{\rm
B}_{ab}$ satisfies the connectivity.  Together with the property
$\calK_{ab}^{\rm B} \geq 0$, $\calK^{\rm B}_{ab}$ (and thus also
$\calL^{\rm B}_{ab}$) is a Perron-Frobenius
matrix~\cite{horn2012matrix}.
Applying a corollary of the
Perron-Frobenius theorem~\footnote{See,
for example, Theorem 8.4.5 of Ref.~\protect\onlinecite{horn2012matrix}.}
we find $\kappamax^{\rm B} >
\kappamax^{\rm F}$ and hence the latter part of the theorem follows.
\end{proof}

We note in passing that, a consequence of the Perron-Frobenius
theorem is that the ground state of bosons has
a nonvanishing amplitude $_{\rm B}\langle \phi^a | \Psi_0 \rangle_{\rm B}$
with a definite (say, positive) sign for every basis state
$| \phi^a \rangle_{\rm B}$.
This may be understood as a lattice version of the ``no-node''
theorem~\cite{Feynman,Wu09}.

\begin{figure}
\centering
\includegraphics[width=0.6\columnwidth]{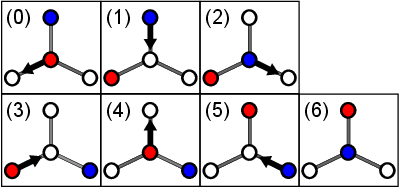}
\caption{A schematic example to show a two-particle exchange process in six steps, where the four-site branch cluster is a subsection of a real arbitrary lattice.}
\label{fig.particle-exchange}
\end{figure}

\subsection{\label{spinful}Natural inequality for spinful case}

Let us now discuss the spinful case.
Here we compare spinful hard-core bosons and spinful fermions
on a finite lattice, with spin-$1/2$.
While actual bosons are known to have only integer spins,
they can have pseudospin-$1/2$, which is sufficient for
the present discussion.
Here the ``hard-core bosons'' means that
two or more particles with the \emph{same}
(pseudo) spin cannot occupy the same site:
$n_{j\sigma}=0,1$, where $\sigma=\uparrow, \downarrow$.
With this constraint, we consider the Hamiltonian,
\begin{eqnarray}
\calH&=&-\sum_{j\ne k}\sum_{\sigma}\left(
t_{jk}c^{\dagger}_{j\sigma}c_{k\sigma}+\mbox{H.c.}  \right)
-\sum_{j\sigma}\mu_j n_{j\sigma}\nonumber\\
{}&&+\sum_{j\ne k}\sum_{\sigma\sigma'}V_{jk}n_{j\sigma}n_{k\sigma'}
+\sum_{j}U_{j}n_{j\uparrow}n_{j\downarrow},
\label{eq.spinfulHam}
\end{eqnarray}
which is a generalization of Eq.~\eqref{eq.Ham} with the
introduction of the spin degrees of freedom $\sigma=\uparrow, \downarrow$.

Let us first discuss the case in which all $U_j$'s are finite.
Then the following simple generalization of
Theorem~\ref{thm.natural_spinless} holds:
\begin{theorem}\textup{(Natural inequality for spinful case with finite $U_j$'s)}

For any set of finite $U_j$'s, if all the hopping amplitudes $t_{jk}$
are real and nonnegative, the inequality~\eqref{eq.e0b.leq.e0f} holds
for any given number of particles $M\le2N$ on a finite lattice $\Lambda$
with $N$ sites. Furthermore, if the lattice $\Lambda$ is connected, and
has a site directly connected to three or more site, and if the number of
particles satisfies $3\le M\le 2N-3$, the strict inequality holds.
\label{thm.natural_spinful}
\end{theorem}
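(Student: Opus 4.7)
The plan is to adapt the strategy of Theorem~\ref{thm.natural_spinless} to the spinful Hamiltonian~\eqref{eq.spinfulHam}, isolating only the places where spin and the on-site $U_j$ term require extra care. First, I would work in the occupation number basis $|\phi^a\rangle = |\{n^a_{j\sigma}\}\rangle$ with a fixed global ordering of the $2N$ modes $c^\dagger_{j\sigma}$ so that the Jordan--Wigner signs for fermions are unambiguous. Because the hard-core constraint for spinful bosons is $n_{j\sigma}\in\{0,1\}$ per spin species, the local Hilbert space on each site has dimension four for both statistics, and the basis states are in one-to-one correspondence between the bosonic and fermionic sectors.

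The key step is to re-establish the matrix-element identity $\calK^{\rm B}_{ab} = |\calK^{\rm F}_{ab}|$, where $\calK^{\rm B,F}$ is defined exactly as in Eq.~\eqref{eq.K}. For the off-diagonal part, each spin-preserving hop $c^\dagger_{j\sigma}c_{k\sigma}$ yields $+t_{jk}$ for hard-core bosons when allowed and $\pm t_{jk}$ for fermions; crucially, the sign never changes the magnitude. For the diagonal part, the chemical potential, the inter-site density-density interaction $V_{jk}$, and the on-site term $U_j n_{j\uparrow}n_{j\downarrow}$ are all built out of number operators and thus contribute identical diagonal matrix elements in both sectors, irrespective of the magnitudes or signs of $U_j$. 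Once this identity is in place, the variational argument used in the spinless proof carries over unchanged: taking the trial state $|\Psi_0\rangle_{\rm B} = \sum_a |\psi_a| |\phi^a\rangle_{\rm B}$ from the fermionic ground state yields $\kappamax^{\rm B} \ge \kappamax^{\rm F}$, hence $E_0^{\rm B} \le E_0^{\rm F}$, for any $M \le 2N$ (the upper bound being the Pauli/hard-core capacity of the spinful lattice).

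For the strict inequality, I would again expand $\calL^S \equiv (\calK^S)^n$ as in Eq.~\eqref{eq.L_elem} and look for a diagonal process that acquires a minus sign for fermions. The new subtlety is that spin-preserving hopping cycles can only exchange particles of identical spin, so one must guarantee the presence of a same-spin pair. This is exactly where the lower bound $M \ge 3$ enters: by the pigeonhole principle, any three spin-$1/2$ particles contain at least two with the same spin. Dually, having at least three unoccupied modes (vacancies of either spin), which is what $M \le 2N-3$ ensures, lets the exchange cycle find room on the lattice. Given a branching site of degree $\ge 3$, one can then place two same-spin particles near it and run the six-step exchange cycle of Fig.~\ref{fig.particle-exchange} within the fixed-spin sector, concluding $\calL^{\rm B}_{aa} > |\calL^{\rm F}_{aa}|$ at $n=6$. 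Connectivity of the bosonic graph on $\{|\phi^a\rangle_{\rm B}\}$ follows from that of $\Lambda$ together with nonnegativity of $\calK^{\rm B}$, so the Perron--Frobenius theorem again upgrades this to $\kappamax^{\rm B} > \kappamax^{\rm F}$.

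The main obstacle I anticipate is the careful bookkeeping of fermionic signs across the two spin species when establishing $\calK^{\rm B}_{ab} = |\calK^{\rm F}_{ab}|$: the Jordan--Wigner string attached to $c_{j\sigma}$ may run through modes of both spins, but once a global ordering on the $2N$ modes is fixed, the string only contributes an overall $\pm 1$ and never alters the magnitude. A secondary technical point is verifying that the same-spin exchange cycle is still reachable in exactly six hops when one of the intervening sites may carry an opposite-spin particle; this is not an obstruction because the spin-preserving hops for the two same-spin particles decouple from the rest of the configuration in this subprocess, leaving the spinless counting argument intact.
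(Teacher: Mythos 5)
Your proposal follows the paper's strategy closely---the same $\calK^{\rm B,F} = -\calH^{\rm B,F} + C\mathds{1}$ construction, the same identity $\calK^{\rm B}_{ab} = |\calK^{\rm F}_{ab}|$, the same variational argument for the non-strict inequality, and the same six-step exchange plus Perron--Frobenius argument for strictness---but there is one genuine gap in the strict part. You apply the Perron--Frobenius theorem to the full fixed-$M$ matrix, asserting that ``connectivity of the bosonic graph on $\{|\phi^a\rangle_{\rm B}\}$ follows from that of $\Lambda$.'' This is false as stated: the spin-preserving hops conserve $M_\uparrow$ and $M_\downarrow$ separately, so the fixed-$M$ Hamiltonian matrix is block diagonal in $S_z$, and basis states with different $(M_\uparrow, M_\downarrow)$ are never connected by any power of $\calK^{\rm B}$. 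The matrix is therefore reducible, the irreducibility hypothesis of the Perron--Frobenius theorem fails, and the conclusion $\kappamax^{\rm B} > \kappamax^{\rm F}$ does not follow in the form you invoke it. The paper avoids exactly this by diagonalizing in each sub-Hilbert space of fixed $(M, S_z)$ from the outset; within such a sector, and because $U_j$ is finite (so a spin-up particle may hop onto a site already occupied by a spin-down particle), the up particles move as spinless hard-core particles for any frozen down configuration and vice versa, which is what yields irreducibility of each block.

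Once you repair this by working sector by sector, a second point deserves attention (one the paper itself also treats lightly): the strict inequality is then proved per $(M_\uparrow, M_\downarrow)$ block, but the theorem compares $E_0^{\rm B}(M)$ and $E_0^{\rm F}(M)$, which are minima over the blocks, and not every block in the range $3 \le M \le 2N-3$ supports an exchange cycle (for instance a fully polarized block in which the majority spin has fewer than two holes). Your pigeonhole counting ($M \ge 3$ gives a same-spin pair, $M \le 2N-3$ gives room) is the same global counting the paper uses, but to close the argument one must have the strict inequality in a sector where the fermionic minimum is attained. The rest of your proposal---handling the fermionic signs by a fixed global ordering of the $2N$ modes, the observation that the $U_j$ term is diagonal for any sign and magnitude of $U_j$, and the remark that spectator opposite-spin particles do not obstruct the same-spin exchange because double occupancy is permitted at finite $U_j$---matches the paper's reasoning and is correct.
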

The detailed proof including the restriction of filling, which is a straightforward generalization of the proof of
Theorem~\ref{thm.natural_spinless}, is given in Appendix~\ref{app:proof2}.

Now let us discuss the case $U_j = + \infty$.
The first half of Theorem~\ref{thm.natural_spinful},
the non-strict version of the inequality, remains
unaffected by taking the limit $U_j = + \infty$.
It is easily proved by variational principle in the same manner as in Proof of Theorem~\ref{thm.natural_spinless}.
However, the latter half of Theorem~\ref{thm.natural_spinful},
the strict inequality, is affected by taking the limit.

The proof of the strict inequality is based on the Perron-Frobenius
theorem, which requires the irreducibility of the matrix.
For spinless particles and spinful particles with finite $U_j$'s,
when the lattice is connected,
any pair of occupation number basis states
$|\Phi^{a}\rangle$ and $|\Phi^{b}\rangle$
of the many-particle problem
are connected by consecutive application of particle hoppings.
This implies the irreducibility
of the matrix representing the many-body Hamiltonian.
However, in the case of spinful system with $U_j=+\infty$,
connectivity of the lattice does not guarantee the irreducibility
of the many-body Hamiltonian matrix.
An illustrative example is the Hubbard model with $U_j = +\infty$
at half filling.
Each site is occupied by a particle with either spin up or spin
down; there are many
occupation-number
basis states corresponding to different spin
configurations.
However, since there is no empty site, and double
occupancy with spin up and down particles is forbidden,
each basis state is not connected by hopping
to any other basis state.
Therefore, in order to prove the strict inequality,
we need some additional condition which guarantees
the irreducibility of the Hamiltonian matrix.
In fact, the irreducibility of the Hamiltonian matrix at $U_j=+\infty$,
and application of the Perron-Frobenius theorem
were discussed earlier by Tasaki~\cite{tasaki-1989, tasaki-1998}
in the context of Nagaoka's ferromagnetism.
Nagaoka's ferromagnetism is a mechanism of ferromagnetism
in the Hubbard model with a single hole doped into the half filling with $U_j = + \infty$,
and can be understood as a consequence of the Perron-Frobenius theorem.
For that, the irreducibility of the Hamiltonian matrix in a
certain basis is required. In Ref.~\onlinecite{tasaki-1998},
a sufficient condition for the irreducibility was presented:
if the entire lattice is connected by exchange bonds, then
the Hamiltonian matrix in the occupation number basis
is irreducible. Here ``exchange bond''~\cite{tasaki-1998} is defined by a pair
of sites which belongs to a loop of length three or four,
and the whole lattice remains connected via nonvanishing
hopping amplitudes even when the two sites are removed.
Thus we obtain

\begin{theorem}\textup{(Natural inequality for spinful case not above  
half filling)}

When $U_j$'s are either $+\infty$ or finite,
if all the hopping amplitudes $t_{jk}$
are real and nonnegative, the inequality~\eqref{eq.e0b.leq.e0f} holds
for any given number of particles $M \le N$ on a finite lattice $\Lambda$
with $N$ sites. Furthermore, if the entire lattice $\Lambda$ is
connected by exchange bonds, and if the number of
particles satisfies $3 \le M \le N-1$, the strict inequality holds.
\label{thm.natural_infiniteU}
\end{theorem}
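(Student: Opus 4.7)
The plan is to reduce the theorem to the framework of Theorems~\ref{thm.natural_spinless} and~\ref{thm.natural_spinful} after restricting to the singly-occupied subspace (no site carries both spins). The key observation is that the basic identity $\calK^{\rm B}_{ab}=|\calK^{\rm F}_{ab}|$ is preserved under this restriction, since diagonal entries come from $\calH_{\textrm{int}}$ (insensitive to statistics) and off-diagonal entries come from hops between singly-occupied configurations, whose magnitudes are also insensitive to statistics. The variational argument of Theorem~\ref{thm.natural_spinless} then goes through verbatim and yields the non-strict bound $E_0^{\rm B}\le E_0^{\rm F}$ for every $M\le N$. One may equivalently derive this by sending $U_j\to+\infty$ in Theorem~\ref{thm.natural_spinful} and invoking continuity of the ground-state energy in $U_j$.

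For the strict version I would assemble two ingredients. First, for irreducibility of $\calK^{\rm B}$ on the singly-occupied Hilbert space, I would invoke Tasaki's result~\cite{tasaki-1998} as recalled just before the theorem: once the lattice is connected by exchange bonds, any two basis states are linked by a sequence of nearest-neighbor hops, making $\calK^{\rm B}$ a nonnegative irreducible matrix. The hypothesis $M\le N-1$ enters here, since at least one empty site is required both for any hop to be possible when $U_j=+\infty$ and for Tasaki's rearrangement argument to work. Second, I would construct a loop process whose fermionic amplitude carries a minus sign relative to its bosonic counterpart: pigeonhole on $M\ge 3$ forces at least two particles to share a spin label, and the exchange-bond connectivity lets me bring them onto adjacent sites of a triangle or 4-cycle whose remaining vertex is empty, where a cyclic rotation around the loop returns to the initial configuration in three or four hops and implements a two-particle exchange. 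Combined with the always-positive no-hop contribution at the same order, this yields a strict entrywise inequality $(\calK^{\rm B})^n_{aa}>|(\calK^{\rm F})^n_{aa}|$ at some $a$. The Perron--Frobenius corollary~\cite{horn2012matrix} then produces $\kappamax^{\rm B}>\kappamax^{\rm F}$ and hence $E_0^{\rm B}<E_0^{\rm F}$.

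The main obstacle is the irreducibility of the many-body hopping matrix on the restricted Hilbert space, not the exchange-loop construction itself: the half-filled Hubbard example recalled in the text shows that mere connectivity of $\Lambda$ is insufficient, and precisely this gap is what the exchange-bond hypothesis fills via Tasaki's theorem. Once irreducibility is in hand, the remaining exchange-loop argument is a routine adaptation of the spinless case, with the role of the branching vertex played by the short loop sitting on an exchange bond.
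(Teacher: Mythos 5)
Your proposal retraces the paper's own argument almost step for step: the non-strict half via the identity $\calK^{\rm B}_{ab}=|\calK^{\rm F}_{ab}|$ on the projected Hilbert space (equivalently, the $U_j\to+\infty$ limit of Theorem~\ref{thm.natural_spinful}), and the strict half via Tasaki's exchange-bond irreducibility, a two-particle exchange loop, and the Perron--Frobenius corollary. Structurally there is nothing different between your route and the paper's.

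However, the exchange-loop step, which you spell out more explicitly than the paper does, contains a genuine gap, and it is the crux of the strict claim. A triangle works as you say: one hole circulating a 3-loop transposes the two same-spin particles, an odd permutation. But a 4-cycle has \emph{two} remaining vertices, and your ``cyclic rotation'' implements a transposition only when both are empty, i.e., only when there are at least two holes, $M\le N-2$. With exactly one hole ($M=N-1$), circulating the hole around a 4-cycle induces a 3-cycle of the three particles on the loop, an even permutation: if the spectator particle has opposite spin the process does not even return to the initial basis state, and if it has the same spin the closed process carries no minus sign. This cannot be repaired by any longer process: on a bipartite lattice (e.g., the square lattice, which the paper explicitly lists as covered by the theorem), a single hole can only execute closed walks of even length, so \emph{every} closed process is an even permutation of identical particles; the fermionic matrix is then sign-gauge-equivalent to the bosonic one and $E_0^{\rm B}=E_0^{\rm F}$ exactly. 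Hence at $M=N-1$ the strict conclusion itself fails on exchange-bond lattices whose short loops are all 4-cycles, and your construction is sound only when some exchange bond lies on a triangle or when $M\le N-2$. To be fair, the paper's own proof shares this flaw: it asserts, citing Tasaki, that hopping a hole around the loop exchanges the two particles on an exchange bond, but Tasaki's lemma delivers irreducibility (reachability of basis states, for which even permutations suffice), not the odd-permutation closed loop that the boson--fermion sign difference requires.
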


The outline of the proof of Theorem~\ref{thm.natural_infiniteU}  including the restriction of filling, and the numerical verification of the theorems are presented in Appendix~\ref{app:proof3}.

In summary, in this section we have presented three theorems for the validity of the natural inequality for spinless and spinful cases, respectively. Although the proofs of the sufficient conditions for strict version of the natural inequality (see Appendix~\ref{app:proof2},\ref{app:proof3}) are somewhat more involved, the basic idea behind the proofs is the same as in that for Theorem~\ref{thm.natural_spinless}.
That is, bosons have a strictly lower ground-state energy than fermions, when the hopping amplitudes are non-negative and an exchange of particles is allowed.

\section{\label{frustration}Unified understanding of frustration
and diamagnetic inequality}

The role played by \emph{frustration} is of central importance in the
proofs of the theorems. The terminology ``frustration''
is often used for antiferromagnetically interacting spin system
on geometrically frustrated lattices, such as triangular, kagome and pyrochlore
lattices. When there is no global state of the system that minimizes
every antiferromagnetic interaction, there is some frustration.
More generally, frustration may be applicable
to a system with competing interactions,
when the ground state does not minimize individual
interaction simultaneously~\cite{Diep-FrustratedSpinSystems}.

To see that the sign of hopping amplitudes $t_{jk}$ in a many-boson system
is related to frustration, it is illuminating to map
the hard-core boson problem to a spin-$1/2$
quantum spin system~\cite{Matsubara_Matsuda1956}.
The mapping is based on the equivalence between hard-core boson
operators and spin-$1/2$ operators:
\begin{align}
 S_j^+ & \sim b_j^{\dag} , & S_j^- & \sim b_j ,
& S_j^z & \sim b_j^{\dag}b_j - \frac{1}{2} .
\end{align}
It is then easy to see that a hopping term for hard-core bosons
maps to an in-plane exchange interaction:
\begin{align}
 - t_{jk} \left( b^{\dag}_j b_k + b^{\dag}_k b_j \right)
\sim  J^{\perp}_{jk} \left( S^x_j S^x_k + S^y_j S^y_k \right) ,
\end{align}
where $J^{\perp}_{jk} = - 2 t_{jk}$.
Thus the nonnegative $t_{jk}$ corresponds to
ferromagnetic exchange interaction, in terms of the spin system.
When all the exchange couplings are ferromagnetic, there
is no frustration. Namely, every in-plane exchange
interaction energy can be minimized simultaneously
by aligning all the spins to the same direction in the $xy$-plane.
Going back to the original problem of quantum particles,
the direction of the spins in the $xy$-plane corresponds
to the quantal phase of particles at each site.
If all the hopping amplitudes are nonnegative,
every hopping term can be simultaneously minimized
by choosing a uniform phase throughout the system.
In this sense, bosons with nonnegative hopping amplitudes
are unfrustrated with respect to their quantal phase.

Let us now consider the case of fermions.
Since Fermi statistics brings in negative signs
even if all the hoppings $t_{jk}$ are nonnegative,
it would be natural to expect that Fermi statistics
leads to some kind of frustration.
However, it is difficult to formulate this
based on the above mapping to an $S=1/2$ spin system.
To understand the frustration induced by Fermi statistics
in many-particle systems, we introduce an alternative
mapping of the many-body Hamiltonian into
a single-particle tight-binding model.
That is, we identify each of the many-body
occupation number basis states $|\Phi^{a}\rangle$
with a site on a fictitious lattice.
If two occupation number basis states
$|\Phi^a\rangle$ and $|\Phi^b\rangle$ are connected by
Hamiltonian, $\langle \Phi^b|\calH|\Phi^a\rangle\ne 0$, there is a link
connecting sites $a$ and $b$ in the fictitious lattice. If we can start from
an initial state, and return back to the same state by successive
applications of the Hamiltonian~\eqref{eq.Ham}, there is a loop in the
fictitious lattice. For bosons, there is no extra phase in the loop. In
other words, the fictitious lattice for hard-core bosons is flux
free. Therefore, there is no frustration for bosons because there is a
constructive interference among all the paths. In contrast, for
fermions, in the original many-body problem,
if two particles are exchanged and the system returns back to the
initial state, the system acquires an extra $\pi$ phase. The minus sign introduced by Fermi statics is relevant to sign structure~\cite{sign_structure}.
Upon the mapping to the single-particle problem, this is equivalent
to the presence of a $\pi$-flux in the corresponding loop
in the fictitious lattice.
This can be interpreted as frustration, which causes
destructive interferences among different paths.

For a single-particle tight-binding model, introduction of a flux always
raises or does not change the ground-state energy, which is known
as diamagnetic inequality~\cite{diamagnetic_inequality}.
The first half of Theorem~\ref{thm.natural_spinless},
which states the non-strict inequality, may be then regarded
as a corollary of the diamagnetic inequality.
On the other hand, the latter half of the Theorem~\ref{thm.natural_spinless}
concerning the strict inequality does not, to our knowledge,
follow from known results on the diamagnetic inequality.
In fact, the arguments in the proof of Theorem~\ref{thm.natural_spinless}
can be applied to a strict version of the diamagnetic inequality
on general lattices. The general result can be summarized
as follows.
\begin{theorem}\textup{(General diamagnetic inequality and its strict version)}

Let us consider a single particle on a finite lattice $\Xi$,
with the eigenequation
\begin{equation}
- \sum_{\beta \in \Xi}
\tau_{\alpha \beta} \psi_\beta = E \psi_\alpha .
\end{equation}
In general, $\tau_{\alpha \beta}$ is complex, with
$\tau_{\alpha\beta} = \tau^*_{\beta\alpha}$.
The ground-state energy $E_0$ for a given set of the hopping
amplitudes $\{ \tau_{\alpha\beta} \}$ satisfies
\begin{equation}
  E_0(\{ \tau'_{\alpha\beta} \equiv
|\tau_{\alpha\beta}| \}) \leq E_0(\{ \tau_{\alpha\beta} \}) .
\end{equation}
Furthermore, the strict inequality,
\begin{equation}
  E_0(\{ \tau'_{\alpha\beta} \equiv |\tau_{\alpha\beta}| \})
< E_0(\{ \tau_{\alpha\beta} \})
\label{eq.strict_dia}
\end{equation}
holds, provided that the lattice $\Xi$ is connected
and there is at least one loop which contains a nonvanishing flux.
A sequence of sites $\{ \alpha_0, \alpha_1, \alpha_2,\ldots,  \alpha_n \}$,
which satisfies $\alpha_l \neq \alpha_{l+1}$,
$\tau_{\alpha_l \alpha_{l+1}} \neq 0$ and
$\alpha_n = \alpha_0$ is called a loop.
The loop contains a nonvanishing flux when the product
\begin{equation}
 \tau_{\alpha_0 \alpha_1} \tau_{\alpha_1 \alpha_2}
\tau_{\alpha_2 \alpha_3} \ldots \tau_{\alpha_{n-1} \alpha_n}
\label{eq.product}
\end{equation}
is not positive (either negative or not real).
\label{thm.diamagnetic}
\end{theorem}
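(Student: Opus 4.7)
My plan is to adapt the variational/Perron--Frobenius argument used in the proof of Theorem~\ref{thm.natural_spinless} to the single-particle setting, treating the lattice $\Xi$ itself as the ``fictitious lattice'' from that proof. First I would set $\calK = -\calH + C\mathds{1}$ and $\calK' = -\calH' + C\mathds{1}$, where $\calH$ and $\calH'$ denote the tight-binding operators with hoppings $\tau_{\alpha\beta}$ and $\tau'_{\alpha\beta} = |\tau_{\alpha\beta}|$, and $C$ is chosen large enough to make the diagonal entries positive and to ensure that the ground states of $\calH$ and $\calH'$ correspond to the largest eigenvalues $\kappamax$ and $\kappamax'$ of $\calK$ and $\calK'$, respectively. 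Since the diagonal entries are $C$ in both, the relation $\calK'_{\alpha\beta} = |\calK_{\alpha\beta}|$ holds for all $\alpha,\beta$.

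For the non-strict inequality, let $\psi$ be a top eigenvector of $\calK$. Using $|\psi|$ (entrywise modulus) as a trial state for $\calK'$, the variational principle gives
\begin{equation}
\kappamax' \;\ge\; \sum_{\alpha\beta} |\psi_\alpha|\,|\psi_\beta|\,|\calK_{\alpha\beta}| \;\ge\; \Bigl|\sum_{\alpha\beta}\psi_\alpha^{*}\calK_{\alpha\beta}\psi_\beta\Bigr| \;=\; \kappamax,
\end{equation}
which immediately yields $E_0(\{\tau'\}) \le E_0(\{\tau\})$.

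For the strict version I would analyze when equality can occur. Equality in the inner inequality (a triangle inequality on each link) forces $\psi_\alpha^{*}\calK_{\alpha\beta}\psi_\beta$ to be a nonnegative real number for every $(\alpha,\beta)$ with $\calK_{\alpha\beta}\ne 0$. Equality in the outer inequality says $|\psi|$ saturates the Rayleigh quotient for $\calK'$, so $|\psi|$ is itself a top eigenvector of $\calK'$. Since $\calK'$ is entrywise nonnegative and irreducible (because $\Xi$ is connected and $\calK'_{\alpha\beta}=|\calK_{\alpha\beta}|$), Perron--Frobenius guarantees that its Perron eigenvector has strictly positive entries; hence $|\psi_\alpha|>0$ for every $\alpha$ and we may write $\psi_\alpha = |\psi_\alpha|\,e^{i\theta_\alpha}$. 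The saturation of the triangle inequality then translates into $\arg(\calK_{\alpha\beta}) = \theta_\alpha - \theta_\beta$ on every link, i.e., the phases of $\calK$ are a pure gauge.

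It remains to convert the gauge condition into a statement about loops. For any loop $\{\alpha_0,\alpha_1,\dots,\alpha_n=\alpha_0\}$ as in the statement, the product in Eq.~\eqref{eq.product} has argument $\sum_l(\theta_{\alpha_l}-\theta_{\alpha_{l+1}})=0$, so every loop is flux-free, and the product is positive. This contradicts the hypothesis that at least one loop carries nonvanishing flux, so equality is impossible and the strict inequality~\eqref{eq.strict_dia} must hold. I expect the main obstacle to be handling the two saturation conditions cleanly in tandem---one provides the existence and positivity of the modulus $|\psi_\alpha|$ via Perron--Frobenius, and the other converts phase relations on links into a trivialization of the flux around arbitrary loops---but both are routine once the setup above is in place.
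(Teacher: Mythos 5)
Your proposal is correct, but it takes a genuinely different route from the paper's own proof. The paper proves strictness by taking matrix powers: it sets $\calL=\calK^n$ and $\calL'=(\calK')^n$, with $n$ the length of the fluxed loop, establishes the entrywise bound $\calL'_{\alpha\beta}\ge|\calL_{\alpha\beta}|$, and then gets a \emph{strict} inequality in one diagonal entry, $\calL'_{\alpha_0\alpha_0}>\calL_{\alpha_0\alpha_0}$, by pairing the fluxed loop's contribution with that of the reversed loop (their sum is twice the real part of the product~\eqref{eq.product}, strictly smaller than twice its modulus when the product is not positive); finally it invokes a Perron--Frobenius comparison theorem (strict monotonicity of the spectral radius under entrywise domination by an irreducible nonnegative matrix) to conclude $\kappamax'>\kappamax$. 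You instead run a rigidity (equality-case) analysis of the variational inequality itself: if $\kappamax'=\kappamax$, saturation of the Rayleigh bound forces $|\psi|$ to be the Perron vector of $\calK'$ (hence strictly positive, by irreducibility following from connectedness), and saturation of the triangle inequality then forces $\arg\tau_{\alpha\beta}=\theta_\alpha-\theta_\beta$ on every bond, i.e.\ a pure gauge, which trivializes the flux through every loop and contradicts the hypothesis. Both arguments are sound. Yours needs only the basic Perron--Frobenius facts (simplicity of the Perron root and strict positivity of its eigenvector) plus Rayleigh-quotient rigidity, avoids the matrix-power expansion and the entrywise-domination corollary, and yields a sharper by-product: on a connected lattice, equality of the two ground-state energies can occur only if every loop is flux-free, so the phases of $\{\tau_{\alpha\beta}\}$ are exactly a gauge transformation. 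The paper's power trick, in turn, localizes the strictness in a single matrix element and runs parallel to the proof of Theorem~\ref{thm.natural_spinless}, which is why the same machinery serves both the many-body and single-particle statements. One small repair to your write-up: the claim $\calK'_{\alpha\beta}=|\calK_{\alpha\beta}|$ for \emph{all} entries presumes $\tau_{\alpha\alpha}=0$; since the eigenequation allows on-site terms, a negative $\tau_{\alpha\alpha}$ gives only $\calK'_{\alpha\alpha}=|\tau_{\alpha\alpha}|+C\ge|\tau_{\alpha\alpha}+C|=|\calK_{\alpha\alpha}|$, which is precisely the inequality $\calK'_{\alpha\beta}\ge|\calK_{\alpha\beta}|$ the paper works with; your variational chain and the rigidity analysis go through unchanged with the equality replaced by this inequality.
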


The non-strict version is the standard diamagnetic
inequality~\cite{diamagnetic_inequality,Barry_Simon_diamag}.
However, the strict inequality obtained here appears new,
also in the general context of diamagnetic inequality. 
The detailed proof of Theorem~\ref{thm.diamagnetic} can be found in Appendix~\ref{app:proof4}.

Mapping of the original quantum many-particle problem
to the single-particle problem on a fictitious lattice
provides a unified understanding of
frustration of quantal phase.
When there is a nonvanishing flux in the original
many-particle problem, we observed that there is a frustration
among local quantal phases, which we call hopping frustration.
On the other hand, when the particles in the original
problem are fermions, there is also a frustration
among quantal phases introduced by the Fermi statistics,
which we name \emph{statistical frustration}.
In the original many-particle problem,
the statistical frustration appears rather different
from the hopping frustration.
However, upon mapping to the single-particle problem
on the fictitious lattice, both hopping frustration
and statistical frustration are represented by a
nonvanishing flux in the fictitious lattice.
This provides a unified understanding of hopping
and statistical frustrations.

A system of many bosons with only nonnegative hopping amplitudes
$t_{jk}$ are free of frustration.  
Introduction of any frustration into such a system, for example magnetic flux (hopping frustration), is expected
not to decrease the ground-state energy.
This is a lattice version of Simon's
universal diamagnetism of bosons~\cite{Barry_Simon_diamag}.
However, in many-fermion system, where the statistical frustration
exists, the effect of introducing hopping frustration is a nontrivial problem.
In such a case, the ground-state energy may or may not decrease,
depending on the system in the question.
That is, diamagnetism is not universal
in spinless fermion systems.
Correspondingly, the orbital magnetism of fermions can be
either paramagnetic or diamagnetic,
depending on the model~\cite{Vignale-OrbitalPara-PRL1991}.
Considering each of the frustrations introduces a particular
pattern of magnetic flux in the fictitious lattice,
it is certainly possible that in some cases the hopping frustration
may (partially) cancel the effect of statistical frustration,
so that the introduction of the hopping frustration
actually decreases the ground-state energy. This reveals the fact that the natural inequality could be violated by the introduction of hopping frustration. Some concrete examples, in which the natural inequality is violated, are demonstrated in the following Section.

\section{\label{reversed}Violation of the natural inequality}

In the following, we discuss how the natural inequality can be
violated. Theorems~\ref{thm.natural_spinless}
and~\ref{thm.natural_spinful} leave the possibility of
violation of the inequality in the presence of a hopping frustration,
that is, by choosing negative or complex hopping amplitudes $t_{jk}$.
However, the hopping frustration is a necessary but
not sufficient condition to reverse the natural inequality.
We will demonstrate that the violation of natural inequality
indeed happens in several frustrated systems.
For simplicity, we limit ourselves to the comparison
between spinless fermions and hard-core bosons, with
no interaction other than the hard-core constraint.
The case with density-density interaction will be discussed at the end
of this section.

\subsection{Particles on a ring}

\label{sec:ring}
We start with the best understood and solvable model in one
dimension:
\begin{equation}
\calH=-\sum_{j=1}^N (c_{j}^{\dag}c_{j+1}+\textrm{H.c.}).
\end{equation}
The hard-core boson version of this model, which is equivalent to the
spin-$1/2$ $XY$ chain, can be mapped to free fermions on a
ring by Jordan-Wigner transformation~\cite{LSM, S_Katsura}.
Thus energy eigenvalue problem of hard-core bosons and fermions
on a ring are almost the same, except for the subtle
difference in the boundary condition. For the periodic or antiperiodic
boundary conditions $c_{N+1} \equiv \pm c_1$, the Jordan-Wigner fermions
$\tilde{f}_j$ obey the boundary condition
$ \tilde{f}_{N+1} = \mp e^{i\pi M} \tilde{f}_1$,
where $M$ is the number of Jordan-Wigner fermions
(equals to the number of bosons). If $M$ is assumed as even, it implies
that hard-core bosons with the periodic (antiperiodic) boundary
condition is mapped to free fermions with the antiperiodic (periodic,
respectively) boundary condition. 

Now let us discuss the dependence of the ground-state energy on the
boundary condition. Assuming $M=N/2$ is even, the ground-state energy
density (ground-state energy per site) is given as
\begin{equation}
\epsilon_0 =\frac{E_0}{N}=- \frac{2}{N} \sum_{k} \cos{k}, \label{eq:GS}
\end{equation}
where $k$ is taken over all the momenta in the Fermi sea, $ - \pi/2 \leq
k < \pi/2$.  For the periodic boundary condition (PBC), the wavenumber $k$ is
quantized as $k = 2\pi n/N$, while $k= \pi (2n+1)/N$ for the
antiperiodic boundary condition (APBC), where $n$ ($ - N/4 \leq n < N/4$) is an
integer.

The ground-state energy density asymptotically converges, in the
thermodynamic limit $N \to \infty$, to the same integral for either
boundary condition.  Nevertheless, it does depend on the boundary
condition for a finite $N$.  The difference of ground-state energy is
exactly calculated as
\begin{equation}
\frac{E_0^{\mbox{\scriptsize PBC}}}{N}-\frac{E_0^{\mbox{\scriptsize
APBC}}}{N} =\frac{2[1-\cos(\pi/N)]}{N\sin(\pi/N)}>0,
\end{equation}
for any $N>1$.  The antiperiodic boundary condition gives the lower
ground-state energy.  The leading order of difference can be extracted
in the limit of large $N$ as,
\begin{align}
  \frac{E_0^{\mbox{\scriptsize PBC}}}{N} & = - \frac{2}{\pi} +
\frac{2\pi}{3N^2} + \frac{2\pi^3}{45N^4} + O(\frac{1}{N^6}) \\
\frac{E_0^{\mbox{\scriptsize APBC} }}{N} & = - \frac{2}{\pi} -
\frac{\pi}{3N^2} - \frac{7\pi^3}{180N^4} + O(\frac{1}{N^6}),
\end{align}
for the periodic and antiperiodic boundary conditions.  The
leading term of $O(1/N^2)$ is also determined by conformal
field theory\cite{Ginsparg-Applied-CFT,Alcaraz-CFT}.  It can be seen
that the noninteracting fermions on a ring have a lower ground-state energy with the
antiperiodic boundary condition.

As a result, with periodic boundary condition, hard-core bosons have a
lower ground-state energy than fermions, in full agreement with
Theorem~\ref{thm.natural_spinless}.
On the other hand, the ground-state energy of
hard-core bosons is higher than that of fermions with anti-periodic
boundary condition. The anti-periodic boundary condition can be
understood as a result of insertion of $\pi$-flux inside the ring.
This hopping frustration cancels the
statistical frustration so that the natural inequality is violated.

This example of tight-binding model may look trivial, and indeed the
calculation itself has been known for years. Nevertheless, it is very
useful in highlighting the central physics of the problem, that is, the
effect of the statistical frustration of fermions can be canceled by the flux
or hopping frustration.
The present finding can also be applied to construction of
more nontrivial examples, as we will discuss in the Sec.~\ref{sec:coupled-ring}.

\subsection{Coupled rings}
\label{sec:coupled-ring}
Since hard-core bosons have a higher ground-state energy than fermions on
a ring containing $\pi$ flux inside the ring as proved in Sec.~\ref{sec:ring},
we can construct a series of systems where $E_0^{\rm B} > E_0^{\rm F}$,
by taking many such small rings and connecting them with weak
hoppings. If the inter-ring hoppings are weak enough, they are expected
not to revert the inequality and $E_0^{\rm B} > E_0^{\rm F}$ would be
kept~\cite{Chamon-priv}.

We prove rigorously that, the reversed natural inequality is
indeed still kept
in coupled $\pi$-flux rings, connected by weak hoppings, even in the
thermodynamic limit. One example is $\pi$-flux octagon-square
model. The lattice structure is shown in Fig.~\ref{fig.octagon_square}
(a), where one unit cell is shown in green with basis vectors
$\vec{a}_1=(3,0)$ and $\vec{a}_2=(0,3)$.
This lattice can be deformed into the (topologically equivalent)
$\frac{1}{5}$-depleted square lattice\cite{Troyer1996,Sachdev1996},
which is known for the model of the quasi two-dimensional compound $\textrm{CaV}_4\textrm{O}_9$.
Thus the octagon-square lattice is also called as deformed $\frac{1}{5}$-depleted square lattice. It is sometimes also called as decorated square lattice~\cite{depleted-square2, depleted-square1}.
The hopping amplitudes on thick
and broken lines are denoted by $t$ and $t'$, respectively. The
Hamiltonian is given by
\begin{equation}
\calH=-t \sum_{\langle i,j\rangle\in\textrm{thick,oriented}}
e^{i\pi/4}c_i^{\dag}c_j -t'\sum_{\langle
i,j\rangle\in\textrm{broken}}c_i^{\dag}c_j+\textrm{H.c.},
\label{eq.Haml_octagon_square}
\end{equation}
where ``thick, oriented'' and ``broken'' refer respectively to
the links drawn with arrows and those drawn as broken lines
in Fig.~\ref{fig.octagon_square}(a).
We also assume $t>t'>0$.

By the choice of $e^{i\pi/4}$ hopping phase on the
oriented thick lines, there is a $\pi$ flux in every square. Therefore,
it can be regarded as a model of coupled $\pi$-flux rings by weak
hopping $t'$.
In order to prove $E_0^{\rm B} > E_0^{\rm F}$ rigorously in the coupled rings,
we seek a lower bound for $E_0^{\rm B}$ and an upper bound for $E_0^{\rm F}$.
If the former is higher than the latter, the desired
inequality is proved.
We introduce the positive semi-definite operators,
\begin{eqnarray}
A= t' \sum_{\langle i,j\rangle\in\textrm{Broken}}
(c_i^{\dag}+c_j^{\dag})(c_i+c_j)\ge 0, \label{eq.operatorA}\\ B= t'
\sum_{\langle i,j\rangle\in\textrm{Broken}}
(c_i^{\dag}-c_j^{\dag})(c_i-c_j)\ge 0, \label{eq.operatorB}
\end{eqnarray}
where $A\ge0$ means $\langle \Phi | A | \Phi\rangle\ge0$ for any state
$|\Phi\rangle$.  Therefore, the Hamiltonian for fermions and
bosons can be written as
\begin{eqnarray}
\calH^{\rm F} = \tilde{\calH}^{\rm F}-A=\sum_{\Diamond}
h_{\Diamond}^{\rm F}-A,\\
\calH^{\rm B} = \tilde{\calH}^{\rm B}+B=
\sum_{\Diamond} h_{\Diamond}^{\rm B}+B,
\label{eq.HB.decomp}
\end{eqnarray}
where $h_{\Diamond}^{\rm F}=-t\sum_{i=1}^{4}
(e^{i\pi/4}c_i^{\dag}c_{i+1}+\textrm{H.c.})+t'\sum_{i=1}^4
c_i^{\dag}c_i$ and $h_{\Diamond}^{\rm B}=-t\sum_{i=1}^{4}
(e^{i\pi/4}c_i^{\dag}c_{i+1}+\textrm{H.c.})-t'\sum_{i=1}^4
c_i^{\dag}c_i$, the cluster Hamiltonians defined on a solid-line square
for fermions and bosons, respectively. Noticing $h_{\Diamond}$ commutes
with each other, the ground-state energy of $\tilde{\calH}$ is simply
given by the summation~\cite{Anderson}:
\begin{equation}
\tilde{E}_0= \sum_{\Diamond_i}\epsilon_{\Diamond_i},
\end{equation}
where $\tilde{E}_0$ and $\epsilon_{\Diamond_i}$ are the ground-state
energy of $\tilde{\calH}$ and that of $h_{\Diamond_i}$ on $i$-th $\pi$-flux
square, respectively.

\begin{figure}
\centering
\subfigure[]{
\includegraphics[width=1.6in]{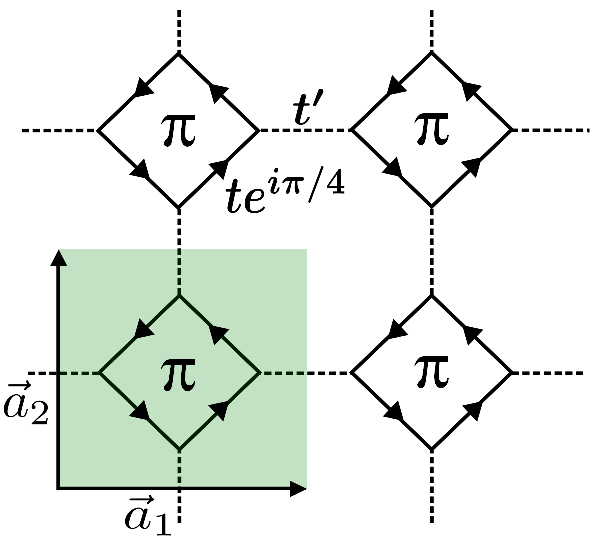}}
\subfigure[]{
\includegraphics[width=1.6in]{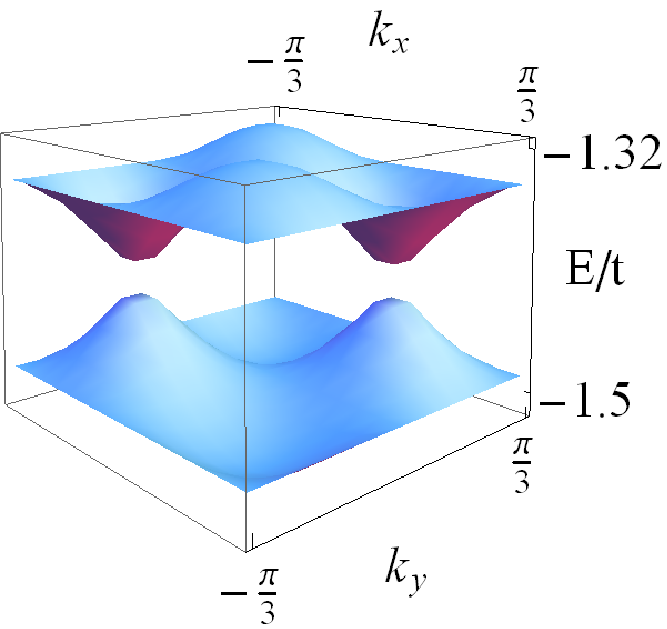}}
\caption{(a) $\pi$-flux octagon-square lattice,
in which a unit cell is shown in green. (b) The lowest two bands of Hamiltonian~\eqref{eq.Haml_octagon_square}
with $t=1$, $t'=0.1$.}
\label{fig.octagon_square}
\end{figure}

Because the operators $B$ is positive semi-definite, the ground-state
energy of bosons satisfies
\begin{equation}
E_0^{\rm B}=\langle \Phi|\calH^{\rm B}|\Phi\rangle \ge\langle
\Phi|\tilde{\calH}^{\rm B}|\Phi\rangle \ge
\tilde{E}_0^{\rm B}=\sum_{\Diamond_i}\epsilon_{\Diamond_i}^{\rm B},
\label{eq.lower_bound_octagon_square}
\end{equation}
where $|\Phi\rangle$ is assumed as the ground state of $\calH^{\rm B}$.

On the other hand, an upper bound of fermions can be derived as,
\begin{equation}
E_0^{\rm F}=\langle \Psi|\calH^{\rm F}|\Psi\rangle \le \langle
\tilde{\Psi}|\calH^{\rm F}|\tilde{\Psi}\rangle \le \langle
\tilde{\Psi}|\tilde{\calH}^{\rm F}|\tilde{\Psi}\rangle =\tilde{E}_0^{\rm
F}=\sum_{\Diamond_i}\epsilon_{\Diamond_i}^{\rm F},
\end{equation}
where $|\Psi\rangle$ and $|\tilde{\Psi}\rangle$ are the ground states of
$\calH^{\rm F}$ and $\tilde{\calH}^{\rm F}$, respectively.

By exact diagonalization, we obtain the ground-state energies
$\epsilon_{\Diamond}^{\rm B, \rm F}(m)$ in given $m$ particles
sectors, shown in Table~\ref{table.octagon-square} in Appendix~\ref{app:tables}.
The number of unit cells is assumed as $N$. From the results of exact
diagonalization, a lower bound for bosons is given by $E_0^{\rm B}\ge
-2N(t+t')$ when $t'/t\le 2-\sqrt{2}$, or $E_0^{\rm B}\ge
-N(\sqrt{2}t+3t')$ when $2-\sqrt{2}<t'/t<1$.  An upper bound for
fermions is given by the $\tilde{E}_0^{\rm F}$, which is dependent on
the density pattern on the whole lattice.  At half filling, an upper
bound of fermions is obtained as
\begin{equation}
E_0^{\rm F}\le -2N(\sqrt{2}t-t').
\label{eq.E0F_upper}
\end{equation}
Thus, when the ratio falls in this range $t'/t < (\sqrt{2}-1)/2$, we
have $E_0^{\rm B} > E_0^{\rm F}$.

Instead of searching an upper bound of fermions, the ground-state energy
of fermions can be exactly calculated at certain filling. For
convenience, $t$ is
set equal to $1$. In the single particle sector, the exact
dispersion relations are obtained by Fourier transformation:
\begin{eqnarray}
E_{\pm}^{(1)}=\pm
\sqrt{(t')^2+2-2t'\sqrt{1-\sin{(3k_x)}\sin{(3k_y)}}},\nonumber\\
E_{\pm}^{(2)}=\pm
\sqrt{(t')^2+2+2t'\sqrt{1-\sin{(3k_x)}\sin{(3k_y)}}},\nonumber
\end{eqnarray}
where $(k_x,k_y)$ is the wavenumber which belongs to the reduced
Brillouin zone $-\pi/3\le k_{x,y}<\pi/3$. The ground-state energy of
fermions at $\mu=0$, which corresponds to the half filling, is given as
\begin{equation}
E_{0}^{\rm F}=\sum_{k_x,k_y}\big[E_{-}^{(1)}(k_x,k_y)+E_{-}^{(2)}(k_x,k_y)\big].
\end{equation}
Under the assumption that the lattice is of size $9L^2$, the number of unit cells $N$ equals
$L^2$. In the thermodynamic limit $L\to \infty$, the ground-state energy
of fermions per unit cell at half filling is given by the integral of
the lowest two bands (shown in Fig.~\ref{fig.octagon_square} (b)) in the
reduced Brillouin zone,
\begin{eqnarray}
\frac{E_0^{\rm F}}{N}&\!=\!&-
\int_{-\pi}^{\pi}\!\!\frac{d\tilde{k}_x}{2\pi}
\int_{-\pi}^{\pi}\!\!\frac{\tilde{k}_y}{2\pi}
\Big[\sqrt{(t')^2+2+2t'\sqrt{1-\sin{\tilde{k}_x}\sin{\tilde{k}_y}}}\nonumber\\
&&{}+\sqrt{(t')^2+2-2t'\sqrt{1-\sin{\tilde{k}_x}\sin{\tilde{k}_y}}}\Big].
\label{eq.octagon_square_Ef}
\end{eqnarray}
It is easily verified that the reversed natural inequality holds with
small ratio of $t'/t$, by comparison of the lower bound of bosons and
numerical integral of Eq.~\eqref{eq.octagon_square_Ef} with given
value of $t'$. For example when $t=1$ and $t'=0.1$, $E_0^{\rm B}\ge
-2.2N > E_0^{\rm F}=-2.831967N$.  When $t'=0.4$, $E_0^{\rm B}\ge -2.8N >
E_0^{\rm F}=-2.885971N.$
The exact result is of course consistent with the
rigorous upper bound~\eqref{eq.E0F_upper}.

Our conjecture that the reversed inequality is kept in the coupled $\pi$-flux rings with weak enough inter-ring hopping is now verified in coupled-square lattice. Moreover, the validity of the conjecture should not depend on the specific lattice. As another example, a proof of the reversed inequality for the breathing kagome lattice at certain filling, which can be regarded as an realization of a coupled-triangle lattice, is presented in  Appendix~\ref{app:coupled_triangles}.

\subsection{System with flux in $2$D and $3$D}
\label{sec:2D-flux}

\begin{figure}
\centering
\subfigure[]{
\label{fig:spectrum28}
\includegraphics[width=2.5in]{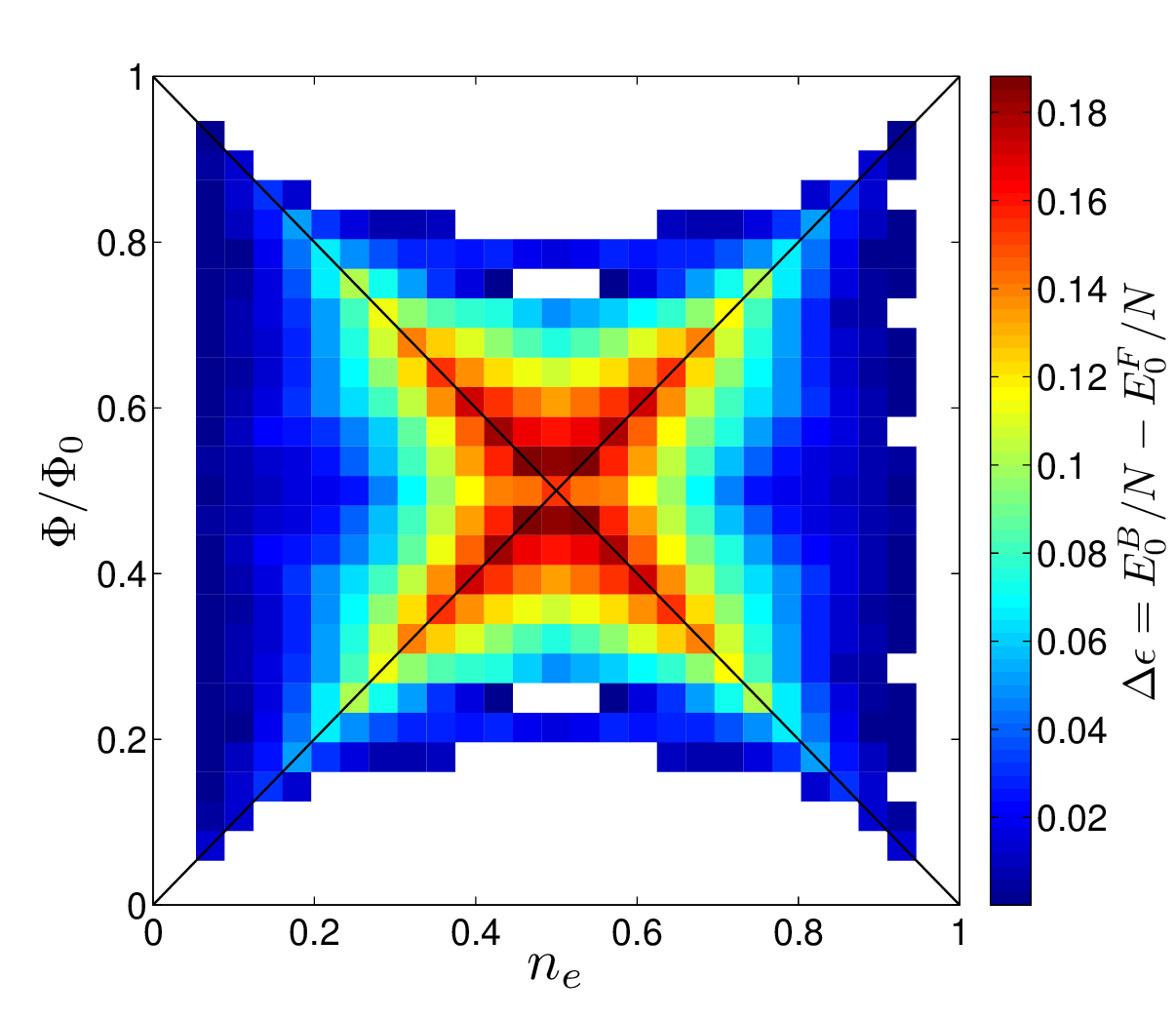}}
\subfigure[]{
\label{fig:spectrum30}
\includegraphics[width=2.5in]{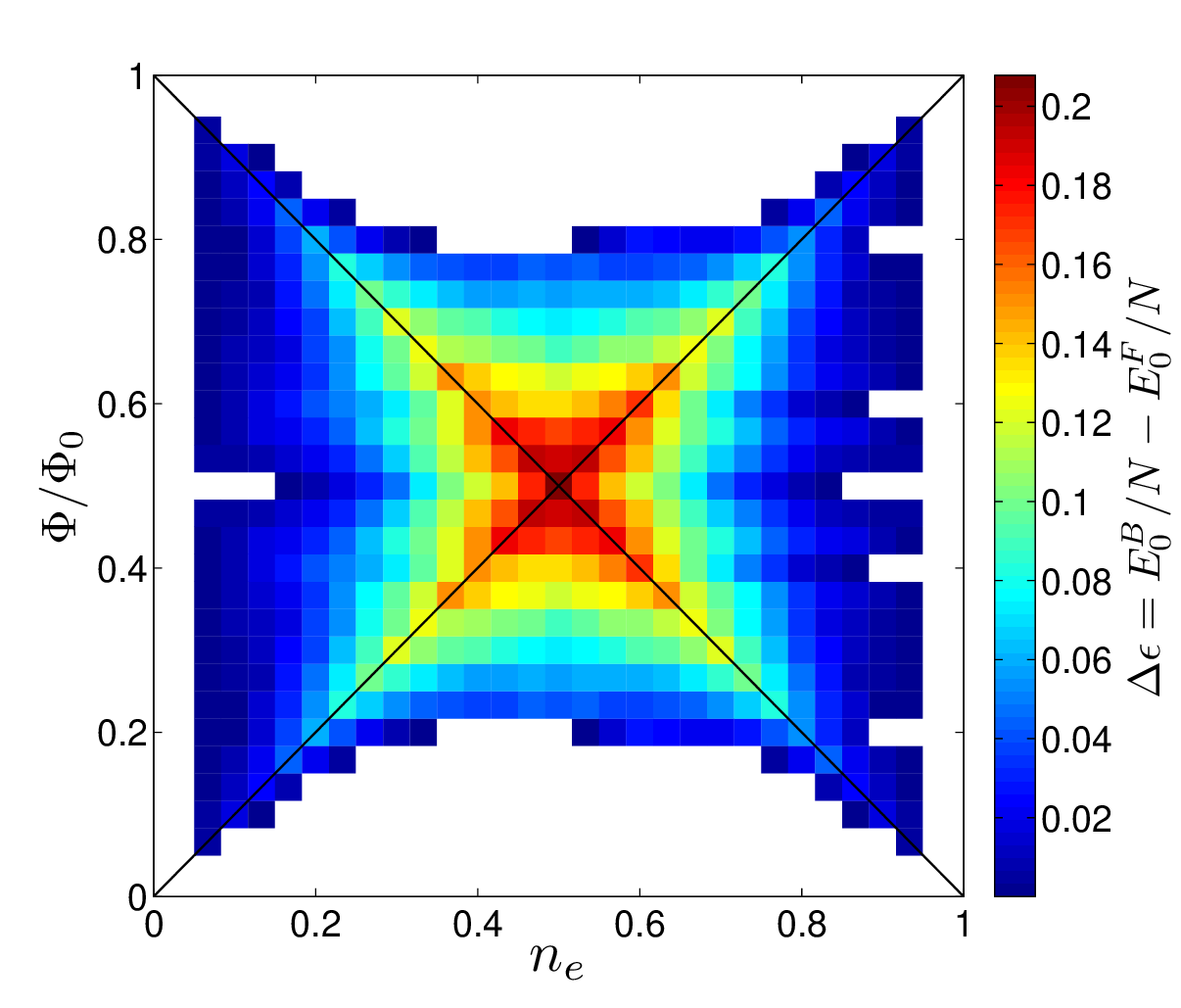}}
\caption{The energy density difference $\Delta \epsilon=E_0^{\rm
B}/N-E_0^{\rm F}/N$ between bosons and fermions on (a) $4\times 7$ and
(b) $5\times 6$ square lattices, where $n_e$ is the number of
particle per site and $\Phi/\Phi_0$ is the number of the flux quanta per
plaquette.}
\label{fig:spectrum}
\end{figure}

As we discussed in Sec.~\ref{sec:ring}, the energy difference between
bosons and fermions on a ring is due to finite-size effect, and
indeed vanishes in the thermodynamic limit.
This is rather natural, it is only the entire system as a ring
that contains $\pi$ flux.
As a simple extension of the idea, here
we consider the two-dimensional square lattice in a
uniform magnetic field, described by the Hamiltonian:
\begin{equation}
\calH =-\sum_{\langle
j,k\rangle}\big(t_{jk}c_{j}^{\dag}c_{k}+\textrm{H.c.}\big),
\label{eq.2D_flux}
\end{equation}
where $t_{jk}=t \exp(i\Phi_{jk}/\Phi_{0})$ and $t>0$.  The flux passing
through every plaquette is $\sum_{\Box}\Phi_{jk}=\Phi$.  With periodic
boundary condition, the total flux is quantized as an integral
multiple of flux quantum ($\Phi_{0}= hc/e$ is $2\pi$ in our unit).
The magnetic field introduces frustration, through the
existence of complex hopping amplitudes $t_{jk}$.
To investigate all the possible values of flux per plaquette,
string gauge~\cite{Hatsugai-stringgauge} is employed.
The string gauge is constructed as follows.
First we choose (the center of) an arbitrary plaquette $S$ as the origin, and draw an oriented path (arrow) from the origin $S$ to every other plaquette. Each oriented path consists of straight segments connecting the centers of neighboring plaquettes.
Once such paths are constructed, the vector potential on each link is
set to $2 \pi m n/N$, where $m$ is the total number of arrows cutting the
edge from the left to the right with respect to the direction of hopping,
and $n$ is an arbitrary integer satisfying $0 \leq n < N$.
Since one of the arrows terminates in each plaquette, 
the flux piercing the plaquette is then $\Phi = n \Phi_0 /N$.
At the origin $S$, where $N-1$ arrows flow from, the flux appears to be
$\Phi = - n (N-1) \Phi_0/N$ instead. However, this is equivalent to
$\Phi = n \Phi_0 /N$, since the flux per plaquette is defined only modulo
$\Phi_0$. In this way, the uniform flux $n \Phi_0/N$ is realized in every
plaquette using the string gauge, although the vector potential is generally
not uniform (translation invariant).

By exact diagonalization, the ground-state energies of bosons and
fermions are obtained with different particle densities ($n_e=M/N$, where $M$ is the number of particles) and various
values of flux.
The relative difference of the ground-state energies in the $4\times 7$ and $5\times 6$ lattices are shown in Fig~\ref{fig:spectrum}.
Here the ground-state energy density differences between bosons and fermions is shown color-coded in the two-dimensional parameter space of the particle density $n_e$ and flux density $\Phi/\Phi_0$.
The natural inequality holds in white regions, while it is violated in colored regions. 
It should be noted that the violation is not necessarily related to band topology. In fact, in the entire region of the parameter space except for $\Phi=0$, each of the single particle bands are characterized by a non-vanishing Chern number~\cite{TKNN}. Nevertheless, the violation of the natural inequality does not happen everywhere. 
Instead, as shown in Fig.~\ref{fig:spectrum}, the violation is nontrivially related to particle density or filling fraction. (Nontrivial dependence on the filling is also found in other models discussed in other sections). To understand the physical origin of the filling-dependence of the relative ground-state energy, one can recall statistical transmutation~\cite{Semenoff,Fradkin} via a flux attachment.  When $\Phi/\Phi_0=n_e$, the background magnetic field can be effectively absorbed by attaching one flux quantum to each particle, at the mean field level ignoring quantum fluctuations. The flux attachment transforms fermions into bosons and vice versa.
In this picture, along the diagonal lines in the plot where
$\Phi/\Phi_0 = n_e$ holds, fermions and hard-core bosons in the magnetic field is mapped respectively to hard-core bosons and fermions in zero field.
According to Theorem 1, the hard-core bosons have a lower ground-state energy
than fermions in zero field. It is thus implied that the violation of the
natural inequality would occur along the diagonal lines.
It should be noted that the flux attachment argument is not rigorous and its range of validity is not established.
Nevertheless, it is remarkable that our numerical calculation indeed reveals the strongest violation along the diagonal lines, as expected from the naive flux attachment argument.

The effect of filling can also be understood in a different way: the energy levels of free electrons (without a lattice or a periodic potential) in a uniform magnetic field are quantized into Landau levels, which can be regarded as completely flat bands.
In the presence of the lattice, each Landau levels are split into dispersive subbands. Nevertheless, one may still regard them as descendants of the Landau level with small dispersion.
Since the main ``disadvantage'' of fermions for lowering the ground-state energy is the Pauli exclusion principle which force some of the fermions to occupy higher-energy states, less dispersive bands are helpful to reverse the natural inequality.(This mechanism will be discussed more explicitly in Sec.~\ref{sec:flat-band}).
The filling $n_e = \Phi/\Phi_0$ corresponds to completely filling the lowest Landau level, and thus can be advantageous to reverse the natural inequality.

We note in passing that, although our numerical results in Figs.~\ref{fig:spectrum} appear almost particle-hole symmetric, a careful examination shows that it is not exactly particle-hole symmetric.
This is because the finite-size lattices used in our calculations are not bipartite, due to the limitation of the system sizes in the exact diagonalization calculation; the bipartiteness is needed for the fermion system on a finite lattice to possess the particle-hole symmetry.

\begin{figure}[htbp]
\centering
\includegraphics[width=3.0in]{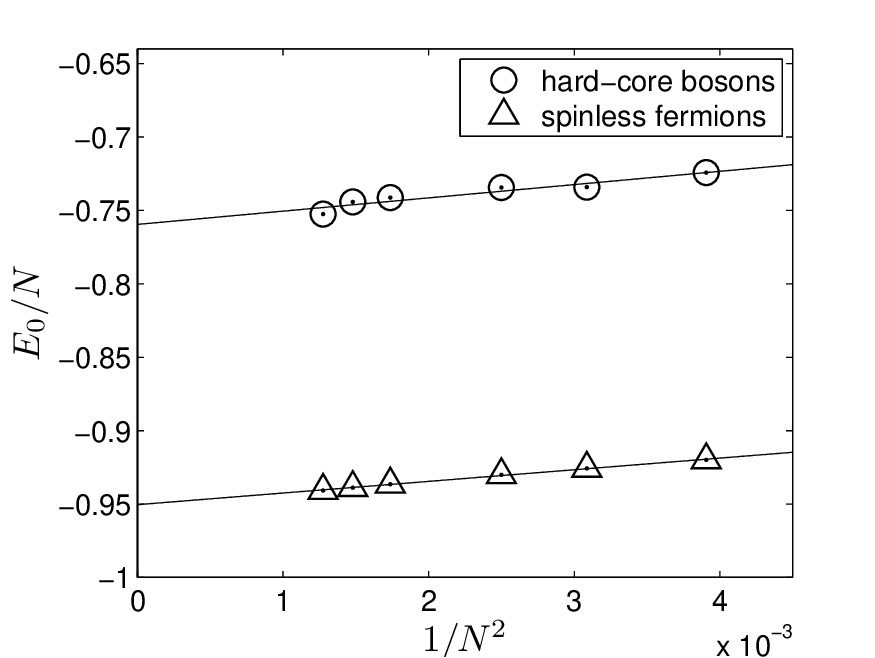}
\caption{Finite-size scaling of ground-state energies in two-dimensional
square lattice with $(N/2-1)\Phi_0/N$ flux per plaquette at filling
fraction $(N/2-1)/N$.  The fitting functions are $E_0^{\rm B}/N =
-0.7593+8.973/N^2+O(N^{-4})$ for hard-core bosons and $E_0^{\rm F}/N
=-0.9507+8.043/N^2+O(N^{-4})$ for fermions respectively.  The
extrapolated ground-state energy density for fermions matches well with
the exact result $-0.958091$ in Eq.~\eqref{eq.exact}.}
\label{fig.scaling_half_filling}
\end{figure}

We plotted Fig.~\ref{fig.scaling_half_filling} to show the finite-size
scalings. Figure~\ref{fig.scaling_half_filling} is the finite-size scaling
with $(N/2-1)\Phi_0/N$ flux per plaquette near half filling
$(N/2-1)/N$. The exact half filling on finite-size lattices ($N/2$
particles on $N$ sites) and the corresponding $\Phi_0/2$ flux per
plaquette are avoided to reduce the strong finite-size effect
(oscillatory behavior) due to commensuration, while the extrapolation
corresponds to the half filling in the thermodynamic limit. The
extrapolation suggests that the fermions have a lower ground-state energy
in the thermodynamic limit.
Actually, we can prove~\cite{ourPRL} rigorously
in the following that this is indeed the case.

\begin{figure}
\centering
\subfigure[]{
\includegraphics[width=1.4in,height=1.5in]{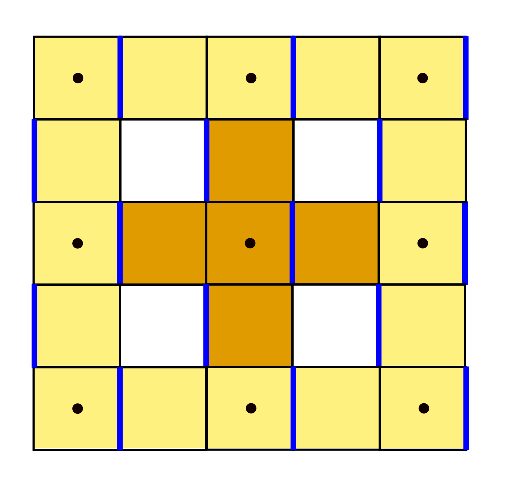}}
\subfigure[]{
\includegraphics[width=1.8in,height=1.5in]{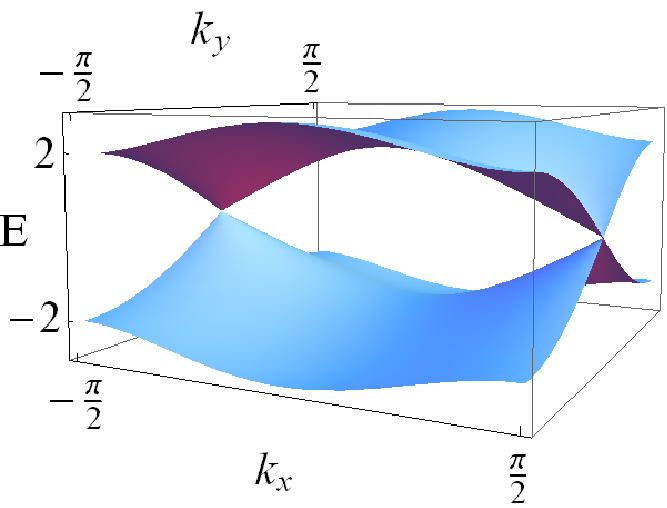}}
\caption{(a) The square lattice with $\pi$ flux in each plaquette.
The brown cross represents a cluster of $12$ sites.
(b) The energy bands in the first Brillouin zone.}
\label{fig.pi_flux}
\end{figure}

As proved by Lieb~\cite{Lieb-flux}, the optimal energy minimizing flux is $\pi$ per plaquette for square lattice at half filling.
Let us discuss the square lattice with $\pi$-flux per plaquette, described by the Hamiltonian~\eqref{eq.2D_flux}.  For convenience, we
choose the gauge so that the hopping amplitude $t_{jk}$ is $+1$ on the black links, and $-1$ on the blue ones as shown in Fig.~\ref{fig.pi_flux} (a).
By taking a $2 \times 2$ unit cell (which is twice as large
as the minimal magnetic unit cell), the dispersion relation is
$E_{\pm}=\pm \sqrt{4+2\cos{2 k_x}-2\cos{2 k_y}},$
where $(k_x,k_y)$ is the wavenumber which belongs to the
reduced Brillouin zone $-\pi/2 \leq k_{x,y} < \pi/2$. The bands in the first Brillouin zone are shown in Fig.~\ref{fig.pi_flux} (b).
Each energy level is doubly degenerate.
The ground-state energy of fermions at zero chemical potential, which
corresponds to the half filling, is given as
$E_0^{\rm F} = \sum_{k_x,k_y} 2 E_{-}(k_x,k_y)$ ,
where the factor $2$ comes from the double degeneracy.
For the square lattice of size $L_x \times L_y$ ($N=L_x L_y$),
$k_{x,y}$ is respectively quantized as integral multiples of
$2\pi/L_{x,y}$.
Thus, in the thermodynamic limit $L_{x,y} \to \infty$, the ground-state energy of the fermionic model at $\mu=0$ is obtained exactly as
\begin{eqnarray}
\frac{E_0^{\rm F}}{N}& = &- \frac{1}{2} \int_{-\pi}^\pi \frac{d \tilde{k}_x}{2\pi} \;
\int_{-\pi}^\pi \frac{d \tilde{k}_y}{2\pi} \;
\sqrt{4+2\cos{\tilde{k}_x}-2\cos{\tilde{k}_y}}\nonumber\\
 &=& - 0.958091.
 \label{eq.exact}
\end{eqnarray}
The extrapolated ground-state energy density of fermions from finite-size scaling in Fig.~\ref{fig.scaling_half_filling} matches well with the exact result.

\begin{figure}[htbp]
\centering \includegraphics[width=3.0in]{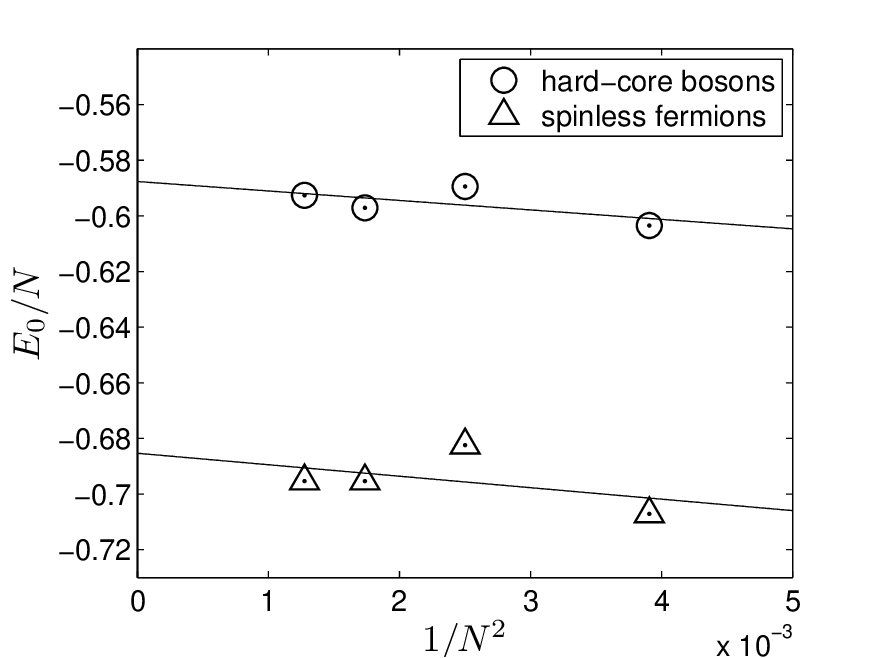}
\caption{Finite-size scaling of ground-state energies in two-dimensional
square lattice with $\Phi_0/4$ flux per plaquette at quarter filling.
The fitting functions are $E_0^{\rm B}/N= -0.5877-3.405/N^2+O(N^{-4})$ for
hard-core bosons and $E_0^{\rm F}/N =-0.6853-4.125/N^2+O(N^{-4})$ for
fermions, respectively.}  \label{fig.scaling_quarter_filling}
\end{figure}

We consider the grand canonical ground-state energy of bosons at the same chemical potential ($\mu=0$). We rewrite the Hamiltonian
$\calH =\sum_\alpha h_\alpha$, where
$h_{\alpha}=-\frac{1}{2}\sum_{\langle j,k\rangle\in\boldsymbol{+}_\alpha}
( t_{jk}c_j^{\dag}c_k+\textrm{H.c.})$ is the cluster Hamiltonian defined on a $12$-site cross-shaped cluster as shown in Fig.~\ref{fig.pi_flux} (a).
The whole lattice is covered by the brown cross-shaped clusters with the same pattern of hopping amplitudes within the cluster, whose centers are denoted by the black dots.
Therefore, each cluster
overlaps with $4$ neighboring clusters and
each link appears in two different clusters
when periodic boundary conditions are imposed.
The factor $1/2$ in $h_{\alpha}$ compensates this double counting.
By the Anderson's argument~\cite{Anderson,Tarrach-lower-bound,Valenti-lower-bound}, the ground-state energy $E_0^{\rm B}$ of $\calH^{\rm B}$ satisfies
$E_0^{\rm B}\geq\sum_\alpha\epsilon_0^\alpha$,
where $\epsilon_0^\alpha$ is the ground-state energy of $h_\alpha$.
The ground-state energy of $h_{\alpha}$ on a cluster with a given particle number $m$  obtained by exact diagonalization is shown in Table~\ref{table.pi_flux} in Appendix~\ref{app:tables}. The grand canonical ground-state energy of the cross-shaped cluster is obtained
as $\epsilon_0^\alpha = - 3.609035$.
Assuming the number of sites in the square lattice is $N$,
we obtain
\begin{equation}
E_0^{\rm B}/N\geq-3.609035/4=-0.902259>E_0^{\rm F}/N,
\end{equation}
where $N/4$ is the number of clusters.
Thus hard-core bosons have a higher ground-state energy than fermions at half filling ($\mu=0$), even in the thermodynamic limit,
as expected from extrapolation from finite-size scaling and statistical transmutation argument~\cite{ourPRL,Semenoff,Fradkin,sedrakyan-PRA,sedrakyan-PRB}.

We note that the choice of cluster decomposition is not unique for a given model.
In order to prove the reversal of the natural inequality,
an appropriate choice of the cluster decomposition with a sufficiently high lower bound for the ground-state energy of bosons relative to that of fermions is necessary.
Here we have discussed the decomposition into cross-shaped clusters,
which can be handled relatively easily but is still useful for proving the reversed natural inequality.
Decomposition into larger clusters is expected to give a more precise estimation of a lower bound. 
Similar comment also applies to the cluster decompositions discussed in Sec.~\ref{sec:flat-band}.

For other values of flux per plaquette or filling fraction,
there is no rigorous proof available at present.
However, the finite-size scaling of numerical data with $\Phi_0/4$
flux per plaquette at quarter filling, shown in
Fig.~\ref{fig.scaling_quarter_filling},
suggests that fermions have a lower
ground-state energy in the thermodynamic limit.

The violation of the natural inequality in systems with flux is not restricted to two dimensions. We have indeed proved that the natural inequality could be reversed in a tight-binding model on a three-dimensional pyrochlore lattice with flux~\cite{ourPRL}.

\subsection{Cluster decomposition in flat band models}
\label{sec:flat-band}

In this section, we present a rigorous proof that
the reversed natural inequality
also holds in several flat-band models,
even in the thermodynamic limit.
Although the
existence of a flat band is neither a necessary nor sufficient condition
to violate Eq.~\eqref{eq.e0b.leq.e0f}, it does tend to help: when the
lowest flat band is occupied by the fermions, there is no extra energy
gain due to Pauli exclusion principle. Therefore, the inversion of the
natural inequality has a better chance to be realized in flat band models.
Here we show that the inequality~\eqref{eq.e0b.leq.e0f} is indeed
violated in a few examples with flat bands, by
a cluster decomposition technique.

First we discuss the delta-chain model, for which the violation of
Eq.~\eqref{eq.e0b.leq.e0f} was numerically found for small
clusters~\cite{HuberAltman2010,Altman-priv}.  The Hamiltonian of the
model can be written in the following
form~\cite{Tasaki-flatband,Mielke-flatband}:
\begin{equation}
\mathcal H=\sum_{j=1}^{N}a_j^{\dag}a_j, \label{eq.deltachain}
\end{equation}
where the $a$-operator, which acts on each triangle, is defined as
$a_j=c_{2j-1}+\sqrt{2}c_{2j}+c_{2j+1}$.  Periodic boundary condition is
used to identify $c_{2N+1}$ with $c_1$.  The Hamiltonian $\mathcal H$
corresponds to a model with negative hopping amplitudes $t_{jk}$ (as
defined in Eq.~\eqref{eq.Ham}), which lead to frustration.

\begin{figure}
\includegraphics[width=0.8\columnwidth]{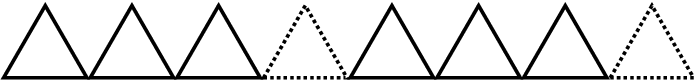} \caption{
An example of decomposition of the delta-chain Hamiltonian to clusters,
with $p=4$ unit cells
including one decoupled
site at the top of the dashed triangle.}
\label{fig.deltachain.p4}
\end{figure}

The model in the single-particle sector has two bands.  The lower flat
band with zero energy is spanned by states annihilated by $a_j$'s.  We
note that the Hamiltonian~\eqref{eq.deltachain} is modified from that in
Ref.~\onlinecite{HuberAltman2010} by a constant chemical potential, so that
the flat band has exactly zero energy.
Thus the ground-state energy of
the fermionic version of the model~\eqref{eq.deltachain} is zero as long
as the filling fraction $\nu$ satisfies $\nu\leq1/2$.

On the other hand, in general, construction of the ground state
of a system of many interacting bosons is not straightforward
even if the single-particle states are known exactly.
However, the flat band in the geometrically
frustrated antiferromagnet also implies the existence of
non-overlapping localized zero-energy states.
It was first pointed out in Ref.~\onlinecite{Schulenburg-PRL2002},
and was later applied to various
problems\cite{Schnack-JPCM2004,*Derzhko-Richter2004,
*Zhitomirsky-Tsunetsugu,*Zhitomirsky-Tsunetsugu-long,*Derzhko2007summary,Zhitomirsky-Honecker2004}.
In the case of the delta chain,
the ground-state energy $E_0^{\rm B}$ of bosons is zero as long as
$\nu\leq1/4$, since each boson can occupy different non-overlapping
localized zero-energy state\cite{Schulenburg-PRL2002,Zhitomirsky-Honecker2004,Schnack-JPCM2004}.

Now let us derive a nontrivial lower bound for $E_0^{\rm B}$ for filling
fractions $\nu >1/4$.  We decompose the model into clusters, each
containing $p$ unit cells:
\begin{equation}
\mathcal H= \sum_{n=0}^{N/p-1}\mathcal
H_{n}^{(p)}+\sum_{n=1}^{N/p}a_{np}^{\dag}a_{np},
\end{equation}
where $\calH_{n}^{(p)}=\sum_{j=1}^{p-1}a^{\dagger}_{np+j}a_{np+j}$ is
the Hamiltonian for the solid triangles as in
Fig~\ref{fig.deltachain.p4}.  Since the second term
$\sum_{n=1}^{N/p}a_{np}^{\dag}a_{np}$,
describing hoppings on dashed triangles, is positive semidefinite, the
ground-state energy $\tilde{E}_0^{\rm B}$ of the first term
$\tilde{\calH}=\sum_{n=0}^{N/p-1}\mathcal H_{n}^{(p)}$ satisfies
$\tilde{E}_0^{\rm B}\leq E_0^{\rm B}$.  $\tilde{\calH}$ is a sum of
mutually commuting cluster Hamiltonians $\calH_{n}^{(p)}$.  Thus
$\tilde{E}_0^{\rm B}$ is simply given by the sum of the ground-state
energies of all clusters.
The particle number within each cluster is also conserved separately in
$\tilde{\calH}$.  Let us choose $p=4$ as in
Fig.~\ref{fig.deltachain.p4}, so that the cluster contains 8 sites.
The ground-state energy in each sector with fixed particle number $m$ is
obtained by 
exact diagonalization of the $8$-site cluster, which is shown in Table~\ref{table.delta-chain} in Appendix~\ref{app:tables}.
We find
$\epsilon_0^{(4)}(m)\geq\Delta_{\mbox{\scriptsize DC}}^{(4)}=0.372605$
for $4\leq m\leq 8$,
while $\epsilon_0^{(4)}(m)=0$ for $0\leq m\leq3$.

If we consider the filling fraction in the range $3/8<\nu\leq1/2$, it
follows from Dirichlet's box principle that there is at least one
cluster which contains 4 or more particles.  Thus, in this range,
$\tilde{E}_0^{\rm B}\geq\Delta_{\mbox{\scriptsize DC}}^{(4)}$ for any
system size $N$, while $E_0^{\rm F}=0$.  Therefore, the inversion of the
ground-state energies holds also in the thermodynamic limit.

The outcome of the above argument depends on the cluster size taken.  In
fact, the range of filling fraction $\nu$ for which we have proved the
violation of Eq.~\eqref{eq.e0b.leq.e0f} is not optimal.
In Appendix.~\ref{app:optimal_bound}, using a different
technique, we will extend the range to $1/4 < \nu \le 1/2$;
the lower bound $1/4$ is in fact optimal.

This method can be easily extended to other lattices.  For example, the
standard nearest-neighbor hopping model on the kagome lattice can be written
as
\begin{equation}
\mathcal H = \sum_{\alpha}
a_{\bigtriangleup_\alpha}^{\dag}a_{\bigtriangleup_\alpha} +\sum_{\alpha}
a_{\bigtriangledown_\alpha}^{\dag} a_{\bigtriangledown_\alpha},
\label{eq:kagome}
\end{equation}
where $\bigtriangleup_\alpha$ and $\bigtriangledown_\alpha$ are
elementary triangles pointing up and down, respectively, of the kagome
lattice, as shown in Fig.~\ref{fig.star.david}.  We define
$a_{\bigtriangleup_\alpha} \equiv c_{\alpha_1} + c_{\alpha_2} +
c_{\alpha_3}$, where $\alpha_{1,2,3}$ refer to the three sites belonging
to $\bigtriangleup_\alpha$, and likewise for
$a_{\bigtriangledown_\alpha}$.  The fermionic version of the model has
three bands, the lowest of which is a flat band at zero
energy~\cite{Mielke-kagome,Mielke-flatband,Bergman}.
Thus $E_0^{\rm F} =0$ when $\nu \leq 1/3$.

For the ground-state energy of the bosonic version, we can use the
cluster decomposition technique similar to what we have discussed above
for the delta-chain.  Let us choose the 12-site cluster of the ``Star of
David'' shape, which is shown by solid lines in Fig.~\ref{fig.star.david}.  The ground-state energy of the cluster in each sector
with $m$ particles is shown in Table~\ref{table.kagome} in Appendix~\ref{app:tables}.  The
ground-state energy $\epsilon_0^{\textrm{cluster}}$ of each cluster is
zero with $m\leq 3$, but is positive with $m \geq 4$.  Thus, invoking
Dirichlet's box principle again, Eq.~\eqref{eq.e0b.leq.e0f} is violated
for filling fraction $1/4 < \nu \leq 1/3$.
This conclusion also holds in the thermodynamic limit,
where the system size $N$ is taken to the infinity while keeping
the filling fraction $\nu$ constant.

\begin{figure}
\includegraphics[width=0.4\columnwidth]{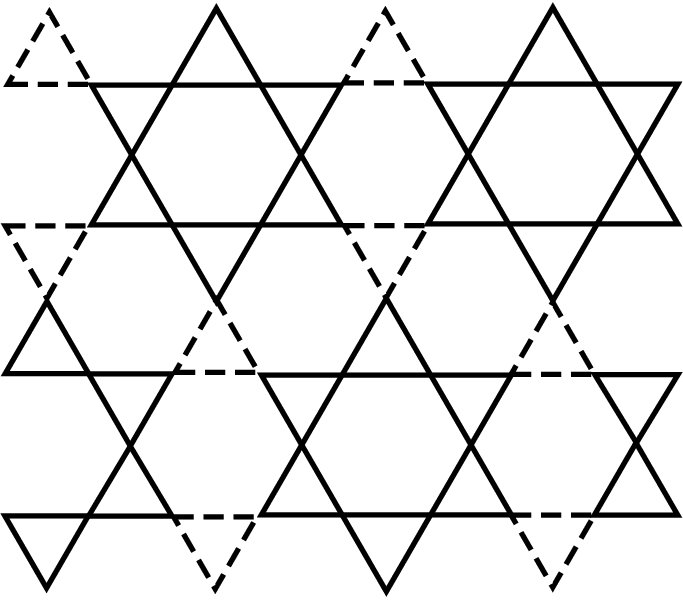} \caption{The
$12$-site clusters of ``Star of David'' shape are shown in solid lines
on kagome lattice.}  \label{fig.star.david}
\end{figure}

\subsection{\label{interaction}Extension to interacting systems}

Throughout most of this paper, we limited the interactions to the
hard-core ones for technical simplicity: fermions are then free, while bosons
are subject only to the hard-core interaction.
Here we comment briefly on the effect of the other possible interactions.
Theorems~\ref{thm.natural_spinless},~\ref{thm.natural_spinful} and~\ref{thm.natural_infiniteU}
are actually valid even in the presence of density-density interactions other than the hard-core interaction.
Introduction of additional density-density interactions
should not essentially modify the comparison of the ground-state energies, as it would affect bosonic
and fermionic models in a similar manner. For example, the interaction terms are introduced in diagonal terms in the matrix of Hamiltonian in Theorem~\ref{thm.natural_spinless}, which do not affect the conclusion of the comparison. Therefore, in order to understand the essence of physics in the present
problem, it would suffice to consider the hard-core interactions only.

That said, in fact, one can actually prove that
the inequality~\eqref{eq.e0b.leq.e0f}
is violated even in the presence of an additional
density-density interactions in the one-dimensional ring with $\pi$ flux
discussed in Sec.~\ref{sec:ring}.
This can be seen by noting that Jordan-Wigner transformation applies
regardless of the presence of density-density interactions,
and implies
\begin{equation}
E_0^{\rm F}(\Phi=\pi)=E_0^{\rm B}(\Phi=0), 
\end{equation}
where the number of particles is assumed to be even.
Then we see that a lattice version of Simon's
theorem~\cite{Barry_Simon_diamag}
also applies in the presence of the interaction:
\begin{equation}
E_0^{\rm B}(\Phi=\pi) \geq E_0^{\rm B}(\Phi=0),
\end{equation}
giving $E_0^{\rm B}(\Phi=\pi) \geq E_0^{\rm F}(\Phi=\pi)$.
Furthermore, under appropriate assumptions, it is possible
to prove the strict inequality
$E_0^{\rm B}(\Phi=\pi) > E_0^{\rm F}(\Phi=\pi)$
in the presence of interaction, with an argument similar
to the proof of Theorems~\ref{thm.natural_spinless}
and~\ref{thm.diamagnetic}.

\section{\label{conclusions}Conclusions and discussions}

In this paper, we have proved that the
ground-state energy of hard-core bosons is lower than that of fermions if
there is no frustration in the hopping.

The effect of the statistical phase of fermions can then be understood as a
frustration, since it results in destructive quantum interferences among
different paths.
In fact, the phase introduced by Fermi
statistics can be effectively described by a magnetic flux, after the
mapping to the single-particle tight-binding model on a fictitious
lattice which represents the Fock space.
In this sense, the non-strict version of the natural inequality
is a corollary of the lattice version of the
diamagnetic inequality.
On the other hand, we also proved a strict version of the
natural inequality, under certain conditions.
The key of the proof is the contribution of an exchange
process of two particles, which is exactly what demonstrates
the statistics of the particles.
The argument is also applied to prove the
strict version of the diamagnetic inequality on the lattice.

Once a magnetic flux is introduced in the original many-particle
problem, the hopping terms can be frustrated.  The hopping frustration
can partially cancel the statistical frustration of fermions, hinting at
the possibility that the natural inequality can be reversed in the
presence of hopping frustration.  We proved rigorously that the natural
inequality is indeed reversed in the presence of frustration, in various
examples. They include one-dimensional $\pi$-flux ring, coupled rings in
two dimensions, 
systems with flux in $2$D and $3$D, flat band
models by cluster decomposition technique. 
Finally, we demonstrated an example of the
violation of natural inequality with other interaction than hard-core
constraint.

In this paper, we focused on the case of hard-core bosons for
simplicity.  However,
Theorems~\ref{thm.natural_spinless},~\ref{thm.natural_spinful}
and~\ref{thm.natural_infiniteU} can be readily generalized to soft-core
bosons.  This is because hard-core bosons can be regarded as a special
limit of more general interacting bosons.  That is, we can introduce the
on-site interaction $\frac{\calU}{2} n_i (n_i-1)$; the hard-core
constraint can be then implemented by taking $\calU\to+\infty$. The
on-site interaction term is positive semi-definite for bosons, if $\calU
\geq 0$.  Thus the hard-core bosons have a higher ground-state energy
than that of soft-core bosons at finite $\calU$.  This implies the
applicability of Theorems~\ref{thm.natural_spinless},
\ref{thm.natural_spinful} and~\ref{thm.natural_infiniteU} to the
soft-core bosons.

Our analysis of the hard-core boson model also suggests that
the natural inequality for soft-core bosons could be reversed by
introducing the hopping frustrations.
However, soft-core bosons are closer to free bosons,
which never violate the natural inequality because of
the simple argument based on perfect BEC.
Thus the violation would be more difficult to be realized
in soft-core bosons, compared to the hard-core bosons
discussed in this paper.
Other open problems include comparison in the presence of
other degrees of freedom such as the orbital/flavor
of particles.
The non-strict version of the theorems
can be easily generalized to the case with multiple orbitals/flavors.

In this paper, we have also discussed briefly the comparison of
the ground-state energies of spinful bosons and fermions.
The natural inequality still holds in the absence of hopping
frustration.
Although we did not discuss explicitly for spinful particles,
the natural inequality is expected to be violated
by introducing appropriate hopping frustration.

Here it should be recalled that, physical magnetic field not
only introduces phase factors in hopping terms, but is also
coupled to the spin degrees of freedom via Zeeman term.
Thus, Zeeman term should be also taken into account,
in order to discuss a physical magnetic field applied to
the system of charged particles.
The Zeeman term acts as different chemical potentials for
up-spin and down-spin particles.
Thus much of the discussion in the present paper is still applicable.
For example, in the absence of hopping frustration,
the natural inequality still holds even in the presence of
the Zeeman term.
Once hopping frustration is introduced, the natural inequality
can be violated.
However, exactly how the violation of the natural inequality occurs
does depend on the chemical potential, and on the
Zeeman effect in the case of spinful particles.

On the other hand, we also note that phase factors in hopping terms and
Zeeman coupling are two distinct effects, which in principle can be
controlled independently.  In fact, for neutral cold atoms, the phase
factor in hoppings are usually introduced as ``synthetic gauge
field''~\cite{rev-gauge-field}, instead of the physical magnetic
field. This does not produce Zeeman coupling, making it possible to
study the effect of hopping frustrations separately from that of the
Zeeman effect.

\section{Acknowledgement}

We are grateful to
Ehud Altman, Claudio Chamon, Sebastian Huber,
Fumihiko Nakano, Xiwen Guan, Naoki Kawashima, Naomichi Hatano, Hui-Hai Zhao, Zheng-Yu Weng and Long Zhang for the valuable discussions and comments.
W.-X. N. is supported by NSFC
(11704267) and start-up funding from Sichuan University (2018SCU12063), and
MEXT scholarship during the early stage of this work.
M. O. was supported in part by
Grants-in-Aid for Scientific Research (KAKENHI) Nos. JP25103706 and JP16K05469.
H.K. was supported in part by JSPS KAKENHI Grant No. JP23740298 and JP15K17719.
A part of the present work was carried out
during a visit of W.-X. N. and M. O. to Kavli Institute
for Theoretical Physics, UC Santa Barbara, supported by
US National Science Foundation Grant No. NSF PHY11-25915.
Part of the numerical
calculation is carried out by
TITPACK ver.2, developed by H. Nishimori.

\appendix

\section{Proof of Theorem $2$}
\label{app:proof2}

\begin{proof}
 Since the total number operator $M=\sum_{j\sigma}n_{j\sigma}$
 and total magnetization
 $S_z=1/2\sum_{j}(n_{j\uparrow}-n_{j\downarrow})$
 commute with the Hamiltonian~\eqref{eq.spinfulHam}, one can
 diagonalize the Hamiltonian in each
 sub-Hilbert space with fixed values of $M$ and $S_z$.
 Each sub-Hilbert space has definite numbers of up-spin and down-spin
 particles.
 Let
 $|\phi^{\mu}\rangle_{\uparrow}\equiv|\{n_{j\uparrow}^{\mu}\}\rangle$
 ($\mu=1,2,\cdots,u$) be the occupation number basis for up-spin
 particles,  and
 $|\psi^{\nu}\rangle_{\downarrow}\equiv|\{n_{j\downarrow}^{\nu}\}\rangle$
 ($\nu=1,2,\cdots,v$) be the occupation number basis for down-spin particles.
 Then, we can take the direct product
 $|\Phi^{a}\rangle=|\psi^{\nu}\rangle_{\downarrow}\otimes
 |\phi^{\mu}\rangle_{\uparrow}$,
 where $a=1,2,\cdots, uv$, as the basis of the sub-Hilbert
 space mentioned above.

 The Hamiltonian can be rewritten as:
 \begin{eqnarray}
 \calH&=&\calH_{\textrm{t}}+\calH_{\textrm{int}},\\
 \calH_{\textrm{t}}&=&\mathds{1}^{\downarrow}\otimes\calH_{\textrm{t}}^{\uparrow}
 +\calH_{\textrm{t}}^{\downarrow}\otimes\mathds{1}^{\uparrow},
 \end{eqnarray}
 where $\calH_{\textrm{t}}^{\sigma}=-\sum_{j\ne
 k}(t_{jk}c_{j\sigma}^{\dag}c_{k\sigma}+\mbox{H.c.})$.
The matrix elements of
 the number operator $n_{j\sigma}$ are the same in this basis, for
 hard-core bosons and fermions. We introduce the operator $\calK^{\rm
 B,F} \equiv - \calH^{\rm B,F} + C \mathds{1}$ with a constant $C$.
 Choosing $C$ large enough, we make all the eigenvalues and all the diagonal
 matrix elements  of $\calK^{\rm B,F}$ positive.
 The matrix elements of bosonic and
 fermionic Hamiltonians obey the relation:
 \begin{equation}
 \calK_{ab}^{\rm B}=\left\{
 \begin{array}{ll}
 |\calK_{ab}^{\rm F}|&(a\ne b)\\ \calK_{aa}^{\rm F}& (a=b),
 \end{array}
 \right.
 \end{equation}
 where the diagonal terms correspond to $\calH_{\textrm{int}}$ and the
 off-diagonal terms correspond to $\calH_{\textrm{t}}$. 
The non-strict inequality is easily proved by variational principle in the same manner employed in Proof of Theorem~\ref{thm.natural_spinless}. Here, we focus on the discussion on strict natural inequality for spinful case with finite $U_j$.
 
 With finite $U_j$'s, one site
 can be occupied by one spin-up particle and one spin-down particle.
 Thus spin-up particles can move as spinless particles for any given
 configuration of spin-down particles, and vice versa.
 Of course, the interaction term $\calH_{\textrm{int}}$, which is diagonal
 in this basis, is affected by the presence of particles with opposite
 spins. However, as far as the irreducibility (connectivity) of
 Hamiltonian is concerned, one can regard the system as
 two independent systems of hard-core particles.
 As a consequence, when the lattice $\Lambda$ is connected, any
 pair of basis states $|\Phi^a\rangle_{\rm B}$ and
 $|\Phi^{b} \rangle_{\rm B}$  are connected to each other
 by successive applications of the hopping term in $\calK^{\rm B}$.
 Together with the property
 $\calK_{ab}^{\rm B}\ge 0$, $\calK^{\rm B}$ satisfies the condition of the
 Perron-Frobenius theorem. When the number of particles $M\ge3$, there
 are at least two particles with the same spin. The condition $M\le2N-3$
 guarantees that there are at least two spaces which can accommodate two
 particles with the same spin. Thus, when the number of particles falls
 in the range $3\le M\le2N-3$, we can exchange two identical particles
 and return back to the same state, based on the branch structure as in
 Fig.~\ref{fig.particle-exchange}.
 Therefore, when $U_j$'s are finite, the lattice
 is connected and has a branch structure, and $3\le M\le 2N-3$,
 two-particle exchange always happens.
 As in the proof of Theorem~\ref{thm.natural_spinless}
 for spinless case, the strict inequality $E_0^{\rm B} < E_0^{\rm F}$
 follows from the Perron-Frobenius theorem.
 \end{proof}

\section{Remarks about the proof of Theorem $3$ and discussions}
\label{app:proof3}
The no-strict inequality remains unaffected by taking $U_j=+\infty$. Here, we focus on a sufficient condition for strict natural inequality with infinite repulsion, for spinful case.

With infinite on-site repulsion, the maximum number of
particles is $N$. The condition $M\ge 3$ is to guarantee there are at
least two particles with the same spin such that they can be
exchanged. For a lattice connected by exchange bonds, two particles on
an exchange bond can be exchanged without changing the configuration
outside, by hopping a hole around
the loop on which both the exchange bond and the hole lie~\cite{tasaki-1998}.
Hence, when the number of particle $M$
satisfies $3\le M\le N-1$, two particles with the same spin can be exchanged on an
exchange-bond lattice by successive particle hoppings.

The property that the entire lattice is connected by exchange bonds
can be verified~\cite{tasaki-1998} in various common lattices,
such as triangular, square, simple cubic, fcc, or bcc
lattices, in which nearest neighbor sites are connected
by nonvanishing hopping amplitudes.
Thus, the above theorem holds for these lattices.

We also note that, Nagaoka's ferromagnetism only applies to
the system with single hole with respect to half filling.
However, this restriction is only necessary
to guarantee that all the matrix elements are nonnegative.
The irreducibility of the Hamiltonian matrix does not
require that there is only one hole.
In fact, the breakdown of the positivity in the presence of
more than one holes in the Hubbard model with
$U_j = +\infty$ is precisely due to the Fermi statistics
of the electrons.
If we consider the ``Bose-Hubbard model'' with spin-$1/2$ bosons
instead of electrons, all the matrix elements are nonnegative
in the occupation number basis, for any number of holes.
Thus the Bose-Hubbard model with spin-$1/2$ bosons exhibit
ferromagnetism for any
filling fraction~\cite{EisenbergLieb}.
This nonnegativity of the matrix elements for bosons
is also essential for Theorem~\ref{thm.natural_infiniteU},
which holds for any filling fraction.

The proofs of Theorems~\ref{thm.natural_spinless}
and~\ref{thm.natural_spinful} are insensitive to the signs of the
interaction terms $V_{jk}$ and $U_j$. Namely the natural inequality holds no
matter the interaction is repulsive or attractive. The interesting
aspect of the attractive interaction is that it will induce Cooper pair
of fermions.  In the case of spinless fermions, orbital part of the
Cooper pair wavefunction must be antisymmetric with respect to
the exchange of two fermions. This results in an extra cost in
the kinetic energy. Such a fermionic BEC state thus has a higher ground-state
energy than its bosonic counterpart, in full agreement of Theorem~\ref{thm.natural_spinless}.

In contrast, in the case of spinful fermions, with attractive
interaction, fermions could pair up in the nodeless $s$-channel.
In this case, there is no obvious reason why the fermions have
a higher ground-state energy than bosons.
Nevertheless, according to Theorem~\ref{thm.natural_spinful},
spinful fermions still have strictly higher ground-state energy
than corresponding bosons, even when the pairing is in the
nodeless $s$-channel.

This can be interpreted physically in the
following way. If the paring of two particles is completely robust, the
problem is reduced to the identical problem of bosonic ``molecules'',
whether the original particles are fermions or bosons. Then the ground-state energies should be the same for fermions and bosons. However, in
general, the pairing is not completely robust, and two pairs can
(virtually) exchange each one of their constituent particles. The
amplitude for such a process has negative sign only for fermions,
leading to the nonvanishing energy difference between fermions and
bosons. The exception occurs when the on-site attractive interaction
between up and down spin particles is infinite ($U_j=-\infty$). Then the
pairs are completely robust, and no virtual exchange of constituent
particles occurs; the ground-state energies for fermions and bosons
become identical in this limit. On the other hand, with the infinite
attraction, the irreducibility can not be satisfied.
Because a hopping of a molecule requires its breaking,
which costs an infinite energy and is thus prohibited.
This implies that the bosonic molecules are completely
localized in the model~\eqref{eq.spinfulHam}.
Thus the natural inequality is reduced to the trivial
equality $E_0^{\rm B}=E_0^{\rm F}$ in the limit $U_j\to -\infty$ .

\begin{figure}[htbp]
\centering \includegraphics[scale=0.6]{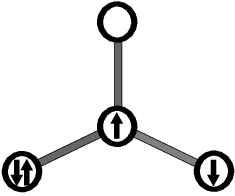} \caption{A $4
$-site lattice with four spins at half filling and $S_z=0$.}
\label{fig.branch}
\end{figure}
\begin{figure}[htbp]
\centering \includegraphics[scale=0.7]{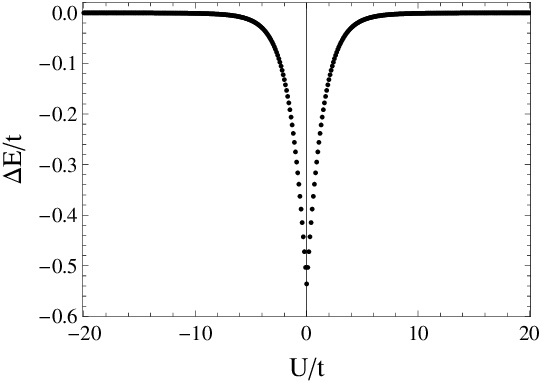} \caption{Difference
of ground-state energy ($\Delta E=E_0^{\rm B}-E_0^{\rm F}$) between
hard-core bosons and fermions on the $4$-site lattice with a branch, in
$S_z=0$ sector with $4$ spins. The absolute value of energy difference
decreases down to $\sim 10^{-8}t$ around $|U|/t=100$.}
\label{fig:Hubbard}
\end{figure}

In the following, as an example, we numerically demonstrate above observations in
spinful hard-core Bose-Hubbard and Fermi-Hubbard models
on a 4-site lattice as shown in Fig.~\ref{fig.branch}.
The Hamiltonian is given by
\begin{equation}
\mathcal H=-t\sum_{\langle
i,j\rangle}\sum_{\sigma}(c_{i\sigma}^{\dag}c_{j\sigma}+\textrm{H.c.})  +U\sum_j
n_{j\uparrow}n_{j\downarrow},
\end{equation}
where $t>0$, $\langle i,j\rangle$ denotes a pair of neighboring sites,
and the hard-core constraint $n_{j\sigma}=0,1$ is again imposed for the bosons.
We consider the spin-$1/2$ bosons and fermions at half filling
(the total number of particles per site $\nu = 1$) and $S^z=0$.
That is, on this 4-site cluster, there are two up-spin particles
and two down-spin particles.
The energy difference between spinful bosons and spinful fermions ($\Delta
E=E_0^{\rm B}-E_0^{\rm F}$) is shown as a function of $U=U_j$ in
Fig.~\ref{fig:Hubbard}.

Conforming to Theorem~\ref{thm.natural_spinful},
$E_0^{\rm B}\le E_0^{\rm F}$ holds for all range of $U$, independent of
the sign of $U$. Moreover, $\Delta E (U)$ is symmetric along $U=0$ due
to particle-hole symmetry of Hubbard model at half filling
$\nu=1$~\cite{PHS-Hubbard}.

When $U$ is finite, fermions have strictly higher
ground-state energy than bosons, again in agreement with
the latter half of Theorem~\ref{thm.natural_spinful}.
When $U=+\infty$, on the other hand,
the particles are completely immobile at half-filling
and thus no particle-exchange occurs.
The ground-state energy is indeed exactly the same for
fermions and for bosons in this limit.
Likewise, in the limit of $U=-\infty$, either bosons or fermions
form completely robust (and immobile) pairs, and the ground-state
energies are exactly the same. In the present case, this can
also be understood as a consequence of the particle-hole
symmetry at half filling~\cite{PHS-Hubbard},
which maps $U \rightarrow -U$.

\section{Proof of Theorem $4$}
\label{app:proof4}

\begin{proof}
The proof is similar to that of Theorem~\ref{thm.natural_spinless}.
We can define the matrices $\calK$, $\calK'$ by
\begin{align}
\calK_{\alpha \beta} & \equiv \tau_{\alpha\beta} + C \delta_{\alpha\beta},
\\
\calK'_{\alpha \beta} & \equiv \tau'_{\alpha\beta} + C \delta_{\alpha\beta},
\end{align}
with a sufficiently large constant $C$ so that $\calK$ and $\calK'$
is positive definite.
We then define $\calL \equiv \calK^n$ and $\calL' \equiv {\calK'}^n$,
for the length $n$ of the loop with a nonvanishing flux.
The positive definiteness of $\calK$ and $\calK'$ implies
that $\calL$ and $\calL'$ are also positive definite,
and thus all the diagonal matrix elements $\calL_{\alpha \alpha}$
and $\calL'_{\alpha \alpha}$ are strictly positive.
Similarly to the proof of Theorem~\ref{thm.natural_spinless},
$\calL'_{\alpha\beta}\ge|\calL_{\alpha\beta}|$ holds for any
$\alpha,\beta$.
In particular, the diagonal matrix elements of $\calL'$ and $\calL$ are expanded as
\begin{align}
\calL'_{\alpha_0\alpha_0} &=
\sum_{\alpha_1,\cdots,\alpha_{n-1}}
\calK'_{\alpha_0 \alpha_1} \calK'_{\alpha_1 \alpha_2}
\ldots \calK'_{\alpha_{n-1} \alpha_0},
\\
\calL_{\alpha_0\alpha_0} &=
\sum_{\alpha_1,\cdots,\alpha_{n-1}}
\calK_{\alpha_0 \alpha_1} \calK_{\alpha_1 \alpha_2}
\ldots \calK_{\alpha_{n-1} \alpha_0}.
\end{align}
Each term in the expansion satisfies
\begin{align}
\calK'_{\alpha_0 \alpha_1} \calK'_{\alpha_1 \alpha_2}
\ldots \calK'_{\alpha_{n-1} \alpha_0}
\ge
\left| \calK_{\alpha_0 \alpha_1} \calK_{\alpha_1 \alpha_2}
\ldots \calK_{\alpha_{n-1} \alpha_0} \right| ,
\end{align}
thanks to
$\calK'_{\alpha\beta} \ge | \calK_{\alpha\beta}|$.
By assumption, there is a nonvanishing
contribution to $\calL_{\alpha_0 \alpha_0}$
from the loop of length $n$,
\begin{align}
\calK_{\alpha_0 \alpha_1} \calK_{\alpha_1 \alpha_2}
\ldots \calK_{\alpha_{n-1} \alpha_0}
=
\tau_{\alpha_0 \alpha_1} \tau_{\alpha_1 \alpha_2}
\ldots \tau_{\alpha_{n-1} \alpha_0},
\label{eq.loop.contrib}
\end{align}
which is not positive.
Here we used the fact that the off-diagonal elements of
$\calK$ and $\tau$ are identical.
Combining with the contribution from its reverse loop
\begin{align}
\calK_{\alpha_0 \alpha_{n-1}} \calK_{\alpha_{n-1} \alpha_{n-2}}
\ldots \calK_{\alpha_1 \alpha_0},
\end{align}
which is the complex conjugate of Eq.~\eqref{eq.loop.contrib},
we find the strict inequality
\begin{multline}
\calK'_{\alpha_0 \alpha_1} \calK'_{\alpha_1 \alpha_2}
\ldots \calK'_{\alpha_{n-1} \alpha_0} + \mbox{c.c.}
\\
>
\calK_{\alpha_0 \alpha_1} \calK_{\alpha_1 \alpha_2}
\ldots \calK_{\alpha_{n-1} \alpha_0} + \mbox{c.c.} .
\end{multline}
Thus $\calL'_{\alpha_0\alpha_0} > \calL_{\alpha_0\alpha_0} > 0$.
Invoking the Perron-Frobenius theorem again, the strict diamagnetic
inequality~\eqref{eq.strict_dia} is proved.
\end{proof}

\section{supporting results of diagonalization involved in this work}
\label{app:tables}
The results of numerical exact diagonalization on finite lattices are presented here to assist the proofs in the main text.

\begin{table}
\begin{center}
\begin{tabular}{|c|c|c|}
\hline $m$ & $\epsilon_{\Diamond}^{\rm F}(m)$ &
$\epsilon_{\Diamond}^{\rm B}(m)$ \\ \hline \hline 1 &
$-\sqrt{2}t+t'$ & $-\sqrt{2}t-t'$ \\ \hline 2 & $-2\sqrt{2}t+2t'$ &
$-2t-2t'$\\ \hline 3 & $-\sqrt{2}t+3t'$ & $-\sqrt{2}t-3t'$\\ \hline 4 &
$4t'$ & $-4t'$\\ \hline
\end{tabular}
\end{center}
\caption{The ground-state energies of fermions and hard-core bosons on a
thick-line square as shown in Fig.~\ref{fig.octagon_square} (a), where $m$ is the number of particles on a
$\pi$-flux square. The results are used in the proof of $\pi$-flux octagon-square model in Sec.~\ref{sec:coupled-ring}.}
\label{table.octagon-square}
\end{table}

\begin{table}
\begin{center}
\begin{tabular}{|c|c|}
\hline
$m$ & $\epsilon_0^{\alpha} (m)$ \\
\hline
\hline
0 & 0  \\
\hline
1 & -1.096997\\
\hline
2 & -2.013783\\
\hline
3 & -2.629382\\
\hline
4 & -3.086229\\
\hline
5 & -3.415430\\
\hline
6 & -3.609035\\
\hline
7 & -3.415430\\
\hline
8 & -3.086229\\
\hline
9 & -2.629382\\
\hline
10 & -2.013783\\
\hline
11 & -1.096997\\
\hline
12 & 0\\
\hline
\end{tabular}
\end{center}
\caption{The lowest energies of $\pi$-flux model on a $12$-site cross-shaped cluster, as shown in Fig.~\ref{fig.pi_flux} (a).
Here $m$ is the number of particles on the cluster. 
It shows that $\epsilon_0^{\alpha}(m=6)$ is the lowest ground-state energy. The results are used in the proof in Sec.~\ref{sec:2D-flux}.}
\label{table.pi_flux}
\end{table}

\begin{table}[h]
\begin{center}
\begin{tabular}{|c||c|c|c|c|c|c|c|c|c|}
\hline m & 1 & 2 & 3 & 4 & 5 & 6 & 7 & 8\\ \hline $\epsilon_0^{(4)}(m)$
& 0 & 0 & 0 & 0.372605 & 1.838145 & 4.323487 & 8 & 12\\ \hline
\end{tabular}
\end{center}
\caption{Ground-state energy $\epsilon_0^{(4)}(m)$ of the cluster Hamiltonian
$\mathcal H_n^{(4)}$ for delta-chain model, as shown in Fig.~\ref{fig.deltachain.p4}, with $m$ particles in a cluster.
It shows the ground-state energy of the $8$-site cluster is strictly positive when there are no less than four particles on this cluster. The results are used in Sec.~\ref{sec:flat-band}.} 
\label{table.delta-chain}
\end{table}

\begin{table}
\begin{center}
\begin{tabular}{|c|c|}
\hline $m$ & $\epsilon_0^{\textrm{cluster}}(m)$\\ \hline \hline 1 & 0\\
\hline 2 & 0\\ \hline 3 & 0\\ \hline 4 & 0.311475\\ \hline 5 &
0.937767\\ \hline 6 & 1.706509\\ \hline 7 & 3.365207\\ \hline 8 &
5.196963\\ \hline 9 & 7.456468\\ \hline 10 & 10.393543\\ \hline 11 &
14\\ \hline 12 & 18\\ \hline
\end{tabular}
\end{center}
\caption{The lowest energies of cluster Hamiltonian $\calH^{\textrm{cluster}}$
on $12$-site ``Start of David'' shape as shown in Fig.~\ref{fig.star.david}, in sectors with different numbers
of particles $m$. It shows the ground-state energy of one cluster is strictly positive when the number of particles on this cluster $m\ge4$. The results are used in Sec.~\ref{sec:flat-band}.}
\label{table.kagome}
\end{table}

\begin{table}
\begin{center}
\begin{tabular}{|c|c|c|}
\hline $m$ & $\epsilon_{\bigtriangleup}^{\rm F}(m)$ &
$\epsilon_{\bigtriangleup}^{\rm B}(m)$ \\ \hline \hline 1 & $-t+2t'$
& $-t-2t'$ \\ \hline 2 & $-2t+4t'$ & $-t-4t'$\\ \hline 3 & $6t'$ &
$-6t'$\\ \hline
\end{tabular}
\end{center}
\caption{The ground-state energies of fermions and hard-core bosons on a
thick-line up triangle  as shown in Fig.~\ref{fig.hexagon_triangle}(a), where $m$ is the number of particles on a
triangle. The results are used in the proof of $\pi$-flux hexagon-triangle model in Appendix~\ref{app:coupled_triangles}.}
\label{table:hexagon-triangle}
\end{table}

\section{coupled triangles}
\label{app:coupled_triangles}

\begin{figure}
\centering
\subfigure[]{
\includegraphics[width=2.0in]{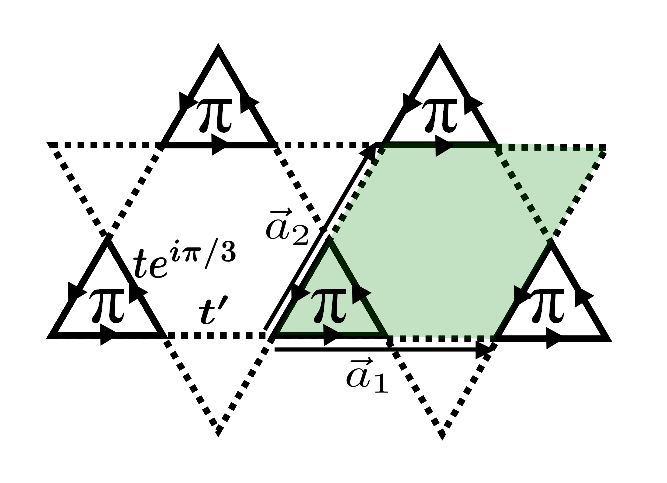}}
\subfigure[]{
\includegraphics[width=1.25in]{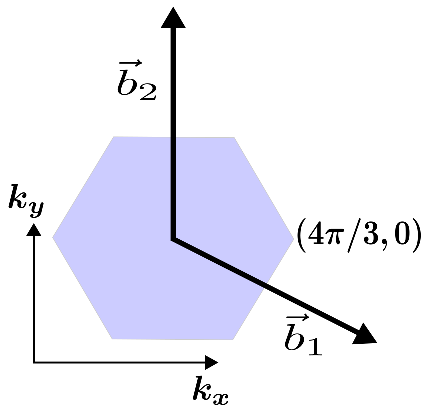}}\\
\subfigure[]{
\includegraphics[width=2.3in]{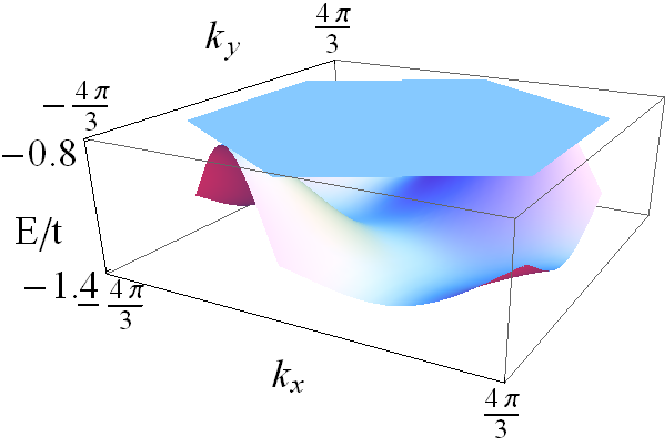}}
\caption{(a) $\pi$-flux hexagon-triangle lattice,
in which a unit cell is shown in green. (b) The first Brillouin zone.
The basis vectors are denoted by $\vec{b}_1$ and $\vec{b}_2$.
(c) Dispersions of the lowest two bands with $t=1$, $t'=0.2$.}
\label{fig.hexagon_triangle}
\end{figure}

The second example to show the natural inequality is reversed in coupled rings as in Sec.~\ref{sec:coupled-ring} is the $\pi$-flux hexagon-triangle
lattice, which is shown in Fig.~\ref{fig.hexagon_triangle} (a). This is actually a breathing kagome lattice. In the vanadium oxyfluoride compound $(\textrm{NH}_4)_2[\textrm{C}_7\textrm{H}_{14}\textrm{N}][\textrm{V}_7\textrm{O}_6\textrm{F}_{18}] (\textrm{DQVOF})$, the $\textrm{V}^{4+}$ ions realize a breathing kagome lattice~\cite{breathing_kagome}, topological equivalent to the hexagon-triangle as we discussed here. One unit
cell is shown in green in Fig.~\ref{fig.hexagon_triangle} (a), with basis
vectors $\vec{a}_1=(0,1)$ and $\vec{a}_2=(1/2,\sqrt{3}/2)$. The
Hamiltonian is defined as
\begin{equation}
\calH=-t \sum_{\langle i,j\rangle\in\textrm{thick, oriented}}
e^{i\pi/3}c_i^{\dag}c_j -t'\sum_{\langle
i,j\rangle\in\textrm{broken}}c_i^{\dag}c_j+\textrm{H.c.},
\end{equation}
where ``thick, oriented'' and ``broken'' links are
specified in Fig.~\ref{fig.hexagon_triangle}(a).
This model can be regarded as triangles with $\pi$-flux,
coupled by weak hopping $t'$.
To obtain a lower bound for the ground-state energy of bosons and
an upper bound for that of fermions,
the Hamiltonians are written as $\calH^{\rm F} = \sum_{\bigtriangleup}
h_{\bigtriangleup}^{\rm F}-A$ and $\calH^{\rm B} = \sum_{\bigtriangleup}
h_{\bigtriangleup}^{\rm B}+B$ with the same definitions of $A$ and $B$
in Eqs.~\eqref{eq.operatorA}\eqref{eq.operatorB}, where
$h_{\bigtriangleup}^{\rm F}=-t\sum_{i=1}^{3}
(e^{i\pi/3}c_i^{\dag}c_{i+1}+\textrm{H.c.})+2t'\sum_{i=1}^3
c_i^{\dag}c_i$ and $h_{\bigtriangleup}^{\rm B}=-t\sum_{i=1}^{3}
(e^{i\pi/3}c_i^{\dag}c_{i+1}+\textrm{H.c.})-2t'\sum_{i=1}^3
c_i^{\dag}c_i$, the cluster Hamiltonians defined on a solid-line
pointing up triangle. Therefore, we have $E_0^{\rm
B}\ge\sum_{\bigtriangleup_i}\epsilon_{\bigtriangleup_i}^{\rm B}$,
$E_0^{\rm F}\le\sum_{\bigtriangleup_i}\epsilon_{\bigtriangleup_i}^{\rm
F}$. The ground-state energies in given $m$-particle sectors are
displayed in Table~\ref{table:hexagon-triangle} in Appendix~\ref{app:tables}.
A lower bound for bosons is now given by
$E_0^{\rm B}\ge -N(t+4t')$ when $t'/t\le 1/2$ or $E_0^{\rm B}\ge -6Nt'$
when $1/2<t'/t<1$, where $N$ is the number of unit cells.
An upper bound for fermions is given by
$\tilde{E}_0^{\rm F}$, which also depends on the density pattern on the
whole lattice. At $2/3$ filling, we find $E_0^{\rm F}\le -2N(t-2t')N$.
According to the results of exact diagonalization on a cluster, we find
when $t'/t < 1/8$, $E_0^{\rm B}\ge -N(t+4t') > -2N(t-2t')\ge E_0^{\rm
F}$. Thus the reversal of the inequality is proved.

The second approach for the ground-state energy of fermions is
based on an exact evaluation.
The dispersion relations are ($t$=1 is assumed):
\begin{eqnarray}
E^{(1)}&=&\frac{1}{2}(1-t'-\sqrt{9(t')^2+6t'+9+8t'\Lambda (\vec
k)}),\nonumber\\ E^{(2)}&=&t'-1,\nonumber\\
E^{(3)}&=&\frac{1}{2}(1-t'+\sqrt{9(t')^2+6t'+9+8t'\Lambda (\vec
k)}),\nonumber
\end{eqnarray}
where $\Lambda (\vec k)=\cos{k_1}+\cos{k_2}-\cos{k_3}$,
$k_{1,2}=\vec{k}\cdot\vec{a}_{1,2} $ and $k_3=k_1-k_2$. The ground-state
energy of fermions at $2/3$ filling is given by the integral of the
lowest two bands in the Brillouin zone, which is shown in
Fig.~\ref{fig.hexagon_triangle} (c),
\begin{eqnarray}
E_{0}^{\rm F}&=&\sum_{k_x,k_y}\big[E^{(1)}(k_x,k_y)+E^{(2)}\big]\nonumber\\
&=&\frac{\sqrt{3}N}{2}\iint_{BZ}\frac{dk_x}{2\pi}\frac{dk_y}{2\pi}
\big[E^{(1)}(k_x,k_y)+E^{(2)}\big] ,
\end{eqnarray}
where $k_{x,y}\in BZ$ as shown in Fig.~\ref{fig.hexagon_triangle}
(b). The basis vectors $\vec{b}_1$ and $\vec{b}_2$ are chosen
accordingly as $2\pi(1,\;-1/\sqrt{3})$ and $2\pi(0,\;2/\sqrt{3})$, respectively.  The reversed natural inequality holds when $t'\ll t$. For example, when
$t=1$ and $t'=0.1$, $E_0^{\rm B}\ge -1.4N>E_0^{\rm F}=-2.004349N$; when
$t'=0.2$, $E_0^{\rm B}\ge -1.8N>E_0^{\rm F}=-2.017037N$.

\section{Optimal lower bound of filling fraction of the violation in delta-chain model}
\label{app:optimal_bound}

Let us improve the estimate of the range of the filling fraction,
for which the violation of Eq.~\eqref{eq.e0b.leq.e0f} occurs
on the delta-chain model, as discussed in Sec.~\ref{sec:flat-band}.
Our result is that the violation occurs, namely the reversed inequality
$E_0^{\rm B} > E_0^{\rm F}$ holds, for $1/4 < \nu \le 1/2$.  In fact, in this range
of filling, the ground-state energy of bosons is strictly positive while
the ground-state energy of fermions is zero.

To prove this, consider Bose-Hubbard model (without hard-core
constraint) with finite on-site $\calU>0$ in the enlarged Hilbert space
first,
\begin{eqnarray}
&&\calH = \calH_{\textrm{hop}}+\calH_{\textrm{int}}\nonumber,\\
&&\calH_{\textrm{hop}} = \sum_{j=1}^{N}a_j^{\dag}a_j\nonumber,\\
&&\calH_{\textrm{int}} = \frac{\cal U}{2}\sum_{i=1}^{2N}n_i (n_i-1)\nonumber,
\end{eqnarray}
where $n_i=c_i^{\dag}c_i$, and $[c_i,c_j^{\dag}]=\delta_{ij}$ for
bosons. The definition of $a$-operator is the same as
$a_j=c_{2j-1}+\sqrt{2}c_{2j}+c_{2j+1}$. The hard-core constraint can be
implemented by taking ${\cal U}\to\infty$, and this problem is reduced to
equation~\eqref{eq.deltachain} in this limit.

Obviously, the hopping term $\cal H_{\textrm{hop}}$ is positive
semi-definite. The on-site interaction, ${\cal U}$ term,
is also positive semi-definite
because $\frac{\cal U}{2} n_i (n_i-1)=\frac{\cal U}{2} c_i^{\dag}c_i^{\dag}c_i c_i$ for bosons. As a
consequence,
none of the eigenvalues can be negative.
Therefore, any state with $E^{\rm B}=0$ is a ground state.
If such a ground state $|\Phi_{\textrm{GS}} \rangle$
exists, it satisfies
\begin{equation}
\calH_{\textrm{hop}} | \Phi_{\textrm{GS}} \rangle =
\calH_{\textrm{int}} | \Phi_{\textrm{GS}} \rangle = 0,
\end{equation}
namely $|\Phi_{\textrm{GS}}\rangle$
a simultaneous zero-energy ground state of
$\calH_{\textrm{hop}}$ and $\calH_{\textrm{int}}$.
Therefore, we first seek zero-energy ground states
of $\calH_{\textrm{hop}}$ and $\calH_{\textrm{int}}$,
separately.

\begin{figure}
\begin{center}
\includegraphics[scale=0.6]{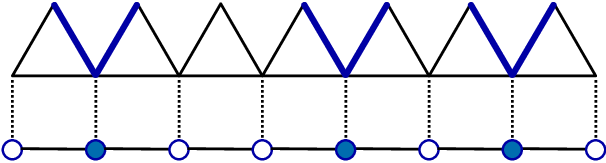} \caption{Schematic figure of
mapping to particle configurations in one-dimensional chain with nearest
neighbor exclusion. Localized zero-energy states(valley states)are shown
in blue lines.}  \label{fig.mapping1D}
\end{center}
\end{figure}

Localized-electron states are discussed in the field of flat-band
ferromagnet of Fermi-Hubbard
model~\cite{Mielke-kagome,Tasaki-flatband,Mielke-flatband,
DerzhkoPRB2007,*Richter2008-1D,*DerzhkoPRB2010}. We can construct the
localized state for bosons in a similar way.
Consider the zero-energy ground state of $\calH_{\textrm{hop}}$ first.
Define $b$-operator as $b_j=c_{2j}-\sqrt{2}c_{2j+1}+c_{2j+2}$. Because
$b$-operators commute with any $a$-operator, $[a_i,b_j^{\dag}]=0$ for
any $i$ and $j$, the single-particle flat band with $E_0^{\rm B}$ is
spanned by $b_j^{\dag}|0\rangle$. Note that these states
$b_j^{\dag}|0\rangle$ are linearly independent but not orthogonal to each other.
The zero-energy state (valley state) $b_j^{\dag}|0\rangle$ is shown in
Fig.~\ref{fig.mapping1D} by blue lines. It is the first excited state
of spin-$1/2$ antiferromagnetic Heisenberg model near saturation field,
with single magnon trapped in the valley of the
delta-chain~\cite{Schulenburg-PRL2002,Schnack-JPCM2004,Zhitomirsky-Honecker2004}.
The current setup corresponds to the magnetic field exactly
at the saturation field, so that these trapped magnons
are exactly at zero energy.
The ground state of $\calH_{\textrm{hop}}$ can be constructed out of
$b$-operators as,
\begin{equation}
|\Phi_0^{\rm B}\rangle = \sum_{\left\{n_1,\cdots,n_N\right\}}f(n_1,\cdots,n_N)
(b_1^{\dag})^{n_1}(b_2^{\dag})^{n_2}\cdots(b_N^{\dag})^{n_N}|0\rangle,
\label{eq.GS_Hhop}
\end{equation}
where $n_j=0,1,2,\cdots$ and $f(n_1,\cdots,n_N)$ is the coefficient. It
is easy to confirm $\calH_{\textrm{hop}}|\Phi_0^{\rm B}\rangle=0$,
by using the commutation relation $[a_i,b_j^{\dag}]=0$.

Now we require those zero-energy ground states~\eqref{eq.GS_Hhop} of
$\calH_{\textrm{hop}}$ to satisfy
$\calH_{\textrm{int}} | \Phi_0^{\rm B}\rangle = 0$.
This is equivalent to require $c_i c_i |\Phi_0^{\rm B}\rangle=0$,
which imposes restrictions on the coefficients
$f(n_1,\cdots,n_N)$.
We first note that {\setlength\arraycolsep{2pt}}
\begin{eqnarray}
c_{2j+1}^2|\Phi_0^{\rm B}\rangle & = &
\sum_{\left\{n_1,\cdots,n_N\right\}}2n_j(n_j-1)f(n_1,\cdots,n_N)\times\nonumber\\
&&(b_1^{\dag})^{n_1}\cdots(b_j^{\dag})^{n_j-2}\cdots(b_N^{\dag})^{n_N}|0\rangle.
\end{eqnarray}
Then the linear independence of $b$-operators, together with
$c_{2j+1}^2|\Phi_0^{\rm B}\rangle=0$, implies that
$f(n_1,\cdots,n_N)=0$ if there exists $j$ such that $n_j >1$.
We thus restrict our attention to the case where $n_j=0$ or $1$ for all $j$ in the sum. We successively find
{\setlength\arraycolsep{2pt}}
\begin{eqnarray}
c_{2j}^2|\Phi_0^{\rm B}\rangle & = &
\sum_{\left\{n_1,\cdots,n_N\right\}}2n_{j-1}n_j
f(n_1,\cdots,n_N)\times\nonumber\\
&&(b_1^{\dag})^{n_1}\cdots(b_{j-1}^{\dag})^{n_{j-1}-1}
(b_{j}^{\dag})^{n_{j}-1}\cdots(b_N^{\dag})^{n_N}|0\rangle,\nonumber\\
\end{eqnarray}
where $n_j=0$ or $1$ has been applied. From the linear independence of
$b$-operators and $c_{2j}^2|\Phi_0^{\rm B}\rangle=0$,
we see that $f(n_1,\cdots,n_N)=0$ if there exists $j$ such that $n_j n_{j-1}\ne0$.
This implies that, for bosons, in the construction of
the zero-energy ground state, $b^\dagger$-operators on adjacent
valleys cannot be applied on the vacuum $|0\rangle$.
Thus, the zero-energy ground states are in
one-to-one correspondence with particle configurations in
one-dimensional chain with nearest neighbor exclusion. This mapping is
schematically shown in Fig.~\ref{fig.mapping1D}. In the range $\nu\le
1/4$, we can find a particle configuration that satisfies the exclusion
rule. However, in the case $\nu>1/4$ we cannot find such configuration,
implying the absence of zero-energy state.

The zero-energy ground states remain as ground states for any ${\cal U}>0$,
and hence in the limit ${\cal U}\to\infty$. Since the on-site ${\cal U}$ term is
positive semi-definite, no state joins the zero-energy sector with
increasing ${\cal U}$. Therefore, the ground-state energy of hard-core bosons
(corresponding to infinite ${\cal U}$) is strictly positive in the range of
filling $\nu>1/4$.

\begin{figure}
\begin{center}
\includegraphics[scale=0.6]{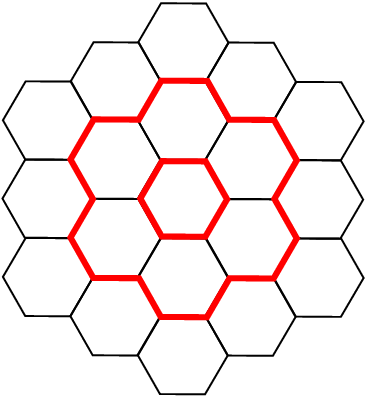} 
\caption{One example of uncontractible cycle sets on honeycomb lattice, which is constituted by two uncontractible cycles.}
\label{fig.cycle_sets}
\end{center}
\end{figure}

On the other hand, for fermions, $\{a_i,b_j^{\dag}\}=0$ holds for any
$i$ and $j$. The zero-energy state for fermions in the range of filling
fraction $\nu\le1/2$ can also be constructed by $b$ operators,
\begin{equation}
|\Phi_0^{\rm F}\rangle = \sum_{\left\{n_1,\cdots,n_N\right\}}f(n_1,\cdots,n_N)
(b_1^{\dag})^{n_1}(b_2^{\dag})^{n_2}\cdots(b_N^{\dag})^{n_N}|0\rangle,
\end{equation}
where $n_j=0$, $1$. It is easy to confirm that this is the zero-energy state
of $\calH$ because $\calH_{\textrm{hop}}|\Phi_0^{\rm F}\rangle=0$, and
$\calH_{\textrm{int}}$ vanishes.  We conclude the reversed inequality
$E_0^{\rm B} > E_0^{\rm F}$ holds in the range $1/4 < \nu \le 1/2$.

From the above analysis, it also follows that
both bosonic and fermionic systems have
exactly zero-energy ground states for $\nu \leq 1/4$.
Thus the lower bound of the range of the filling fraction
for the reversed inequality to hold, $1/4$, is in fact optimal.

An argument similar to the above for delta-chain model
can be employed for kagome lattice,
to extend the range of filling fraction where the natural inequality
is violated.
The zero-energy states for kagome lattice (the line graph of honeycomb lattice) are in one-to-one
correspondence with \textit{uncontractible cycle sets}
on the honeycomb lattice, as defined in Ref.~\onlinecite{Mielke-2012} in terms of graph theory.
An example of uncontractible cycle sets is shown in Fig.~\ref{fig.cycle_sets}.
It can then be deduced that the zero-energy states exist
for $\nu \leq 1/9$. The uncontractible cycle sets are given by
close-packed hard hexagons~\cite{Schulenburg-PRL2002,Zhitomirsky-Tsunetsugu,Mielke-2012}
at the critical value $\nu=1/9$,
and do not exist for $\nu> 1/9$.
Therefore, the range of filling fraction
in which $E_0^{\rm B}>E_0^{\rm F}$ holds on kagome lattice,
is extended to $1/9<\nu\le1/3$.

\bibliography{biblio}

\end{document}